\newif\ifarxiv
\arxivtrue % FLIP for ITW submission

\ifarxiv
  \documentclass[draftnocls,journal,onecolumn]{IEEEtran}
\else
  \documentclass[conference]{IEEEtran}
\usepackage{pgfplots}
\pgfplotsset{compat=1.18}
  \IEEEoverridecommandlockouts
\fi

\usepackage{amsthm,amsmath,amssymb,latexsym,etoolbox,graphicx,tikz,pgfplots,multicol,xcolor}
\usepackage[shortlabels]{enumitem}
\setlist[itemize]{leftmargin=*,labelsep=0.4em,topsep=1pt,partopsep=0pt,parsep=0pt,itemsep=0pt}
\setlist[enumerate]{leftmargin=*,labelsep=0.4em,align=right,topsep=1pt,partopsep=0pt,parsep=0pt,itemsep=0pt}
\usepackage{caption,subcaption,mathtools,mathrsfs,bbm}
\usepackage[T1]{fontenc}
\usepackage[noadjust]{cite}
\usepackage[ruled,lined,longend]{algorithm2e}
\usepackage{booktabs,array,colortbl}
\newcommand{\faintrule}{\arrayrulecolor{black!25}\midrule\arrayrulecolor{black}}
\usepackage{url}
\ifarxiv
  \usepackage[hidelinks]{hyperref}
\fi
\mathtoolsset{showonlyrefs}
\usetikzlibrary{positioning,chains,fit,shapes,calc,patterns,decorations.pathreplacing}
\allowdisplaybreaks

\theoremstyle{plain}
\newtheorem{theorem}{Theorem}
\newtheorem{lemma}[theorem]{Lemma}
\newtheorem{proposition}[theorem]{Proposition}

\newtheorem{claim}[theorem]{Claim}
\theoremstyle{definition}
\newtheorem{definition}[theorem]{Definition}

\theoremstyle{remark}
\newtheorem{remark}[theorem]{Remark}

\newcommand{\cA}{\mathscr{A}}
\newcommand{\cB}{\mathscr{B}}

\newcommand{\cM}{\mathscr{M}}
\newcommand{\cP}{\mathscr{P}}
\newcommand{\cS}{\mathscr{S}}
\newcommand{\cX}{\mathscr{X}}
\newcommand{\cY}{\mathscr{Y}}

\newcommand{\cD}{\mathscr{D}}
\newcommand{\cF}{\mathscr{F}}
\newcommand{\cG}{\mathscr{G}}

\newcommand{\Prob}{\mathsf{P}}
\newcommand{\Probc}{\mathsf{Q}}
\newcommand{\Probg}{\mathsf{G}}
\newcommand{\Proba}{\tilde{\mathsf{P}}}
\newcommand{\Exp}{\mathbb{E}}
\newcommand{\fP}{\sigma}
\newcommand{\szero}{s_0}

\newcommand{\chn}{W_{Y|X,S}}
\newcommand{\set}[1]{\left\{#1\right\}}
\newcommand{\accept}{\mathsf{ACCEPT}}
\newcommand{\reject}{\mathsf{REJECT}}

\newcommand{\PNA}{\mathrm{na}}
\newcommand{\PADV}{\mathrm{adv}}
\newcommand{\condfull}{M,\hat M,\tilde M,Y^{\tau-1}\!,S^{\tau-1}}
\newcommand{\condnoadv}{M,Y^{\tau-1}\!,S^{\tau-1}}
\newcommand{\condmsg}{M,X^{i-1}\!,Y^{i-1}}
\newcommand{\condmsgseq}{M,X^{\tau-1}\!,Y^{\tau-1}}

\newcommand{\posnoadv}{\Prob^{\PNA}_{M\mid Y^t\!,S^t}}
\newcommand{\encno}{\Prob_{X_\tau\mid \condnoadv}}

\newcommand{\Tg}{T_\gamma}
\newcommand{\PgM}{\Probg_{M\mid Y^t\!,S^t}}

\newcommand{\authdec}{D}

\newcommand{\parnoindent}[1]{\noindent {\it #1:}}
\ifarxiv\fi

\ifarxiv
  \newcommand{\extref}[1]{Appendix~\ref{#1}}
  
  \newcommand{\extapps}{the appendices}
  \newcommand{\eqbr}{}
\else
  \newcommand{\extref}[1]{the extended version~\cite{thispaper-ext}}

  \newcommand{\extapps}{the extended version~\cite{thispaper-ext}}
  
  \newcommand{\eqbr}{\\&\quad}
\fi

\begin{document}
\title{Authentication over Arbitrarily Varying Channels with Causal Adversaries}

\ifarxiv
\author{%
\normalfont
\renewcommand{\arraystretch}{0.75}%
\begin{tabular}{@{}c@{\hspace{2em}}c@{}}
{ Mayank Bakshi} & { Vinod M.\ Prabhakaran} \\
\small\itshape School of Informatics, Computing, and Cyber Systems & \small\itshape School of Technology and Computer Science \\
\small\itshape Northern Arizona University & \small\itshape Tata Institute of Fundamental Research \\
\small Flagstaff, AZ, USA & \small Mumbai, India \\
\small\texttt{mayank.bakshi@nau.edu} & \small\texttt{vinodmp@tifr.res.in} \\[3.2ex]
{ Bikash Kumar Dey} & { Oliver Kosut} \\
\small\itshape Department of Electrical Engineering & \small\itshape School of Electrical, Computer, and Energy Engineering \\
\small\itshape Indian Institute of Technology Bombay & \small\itshape Arizona State University \\
\small Mumbai, India & \small Tempe, AZ, USA \\
\small\texttt{bikash@ee.iitb.ac.in} & \small\texttt{okosut@asu.edu}
\end{tabular}
\thanks{The work of M.\ Bakshi is based upon work supported by the National Science Foundation under Grant No.\ CCF-2107526. This work was also funded under the State of Arizona Technology and Research Initiative Fund (TRIF), administered by the Arizona Board of Regents. V.\ M.\ Prabhakaran acknowledges support of the DAE, Govt.\ of India, under project no.\ RTI4014 and of ANRF, Govt.\ of India, through project ANRF/ARGM/2025/000821/TS. The work of B.\ K.\ Dey was partly supported by the Bharti Centre for Communication at IIT Bombay. The work of O.\ Kosut is supported in part by NSF grant CIF-2312666.}%
}
\else
\author{%
\IEEEauthorblockN{%
Mayank Bakshi\IEEEauthorrefmark{1},\
Vinod M.\ Prabhakaran\IEEEauthorrefmark{2},\
Bikash Kumar Dey\IEEEauthorrefmark{3},\
Oliver Kosut\IEEEauthorrefmark{4}%
}%
\thanks{%
\IEEEauthorrefmark{1}\textit{School of Informatics, Computing, and Cyber Systems, Northern Arizona University},\ \texttt{mayank.bakshi@nau.edu};\
\IEEEauthorrefmark{2}\textit{School of Technology and Computer Science, Tata Institute of Fundamental Research},\ \texttt{vinodmp@tifr.res.in};\
\IEEEauthorrefmark{3}\textit{Department of Electrical Engineering, IIT Bombay},\ \texttt{bikash@ee.iitb.ac.in};\
\IEEEauthorrefmark{4}\textit{School of Electrical, Computer, and Energy Engineering, Arizona State University},\ \texttt{okosut@asu.edu}.%
}%
}
\fi

\maketitle

\begin{abstract}
We study authentication over a discrete memoryless arbitrarily varying channel (AVC) in which the adversary selects the channel state causally based on past channel outputs. We prove that the authentication capacity is positive exactly when the channel is not \emph{distribution-overwritable} under stochastic encoding, and not \emph{I-overwritable} under deterministic encoding. In the stochastic case, whenever the authentication capacity is positive, it coincides with the no-adversary Shannon capacity. Both converses follow from a novel \emph{wait-and-overwrite} attack: the adversary tracks the no-adversary posterior over the message via the posterior guessing probability, and once it concentrates past a threshold, it samples a guess and a decoy from the posterior and overwrites the channel to mimic a no-adversary transmission of the decoy. The positivity results are obtained by constructing positive-rate codes with controlled overlap between codewords, together with a martingale concentration argument that handles adaptive adversarial strategies. For stochastic encoding, the full-capacity achievability further combines a capacity-achieving channel code with an authentication tag. These characterizations separate the causal stochastic-code and causal deterministic-code settings from each other and from the oblivious-adversary setting studied by Kosut and Kliewer.
\end{abstract}

\section{Introduction}
\label{sec:intro}

\noindent
In an authentication problem, the receiver either decodes the
transmitted message with confidence that it was not tampered with, or
raises an alarm. We consider authentication over an arbitrarily
varying channel (AVC) $\chn$ in which the state $S$ is chosen by an
adversary observing the channel output causally
(Fig.~\ref{fig:model}).   

For reliable communication over an AVC, it is shown
in~\cite{CsiszarN88} that the deterministic capacity is positive iff
$\chn$ is not \emph{symmetrizable}. An analogous characterization for
authentication is due to Kosut and Kliewer~\cite{KK18}. In order to model the authentication problem, ~\cite{KK18} assumes that the set of channel states $\cS$ (from which the adversary draws $S$) contains a nominal channel state $\szero$ that corresponds to the input-output transition probability when the adversarial interference is absent. They show that
authentication capacity against an oblivious adversary is positive
iff $\chn$ is not \emph{overwritable}, where $\chn$ is overwritable
if there is a conditional distribution $\Probc_{S\mid X'}$ such that,
for every $x,x'\!\in\!\cX$,
\begin{equation}
\sum_{s\in\cS}\Probc_{S\mid X'}(s\!\mid\!x')\,W(y\!\mid\!x,s)
   =W(y\!\mid\!x',\szero)\quad\forall y\!\in\!\cY.
\label{eq:obow_def}
\end{equation}
\label{def:obow}%
We refer to this as \emph{obliviously overwritable}, to distinguish it
from the related conditions below.
Sufficient conditions for positive capacity were later established for
myopic adversaries with non-causal noisy observations: $\chn$ not
I-overwritable~\cite{BeemerGKKY20mildly}, and $\chn$ not
distribution-overwritable~\cite{BakshiK:23ISIT}.

\paragraph*{This work}
We consider an \emph{output-feedback} adversary that, at each time
$t$, chooses the state $S_t$ as a function of $(S^{t-1},Y^{t-1})$.
This is the causal counterpart of the omniscient adversary and refines
the oblivious adversary of~\cite{KK18}. The corresponding
causal-adversary AVC reliability problem is treated
in~\cite{ChenJL19,DJLS13causal,ZJLS22causal,Langberg08}.
\begin{figure}[t]
\centering
\resizebox{0.95\columnwidth}{!}{%
\begin{tikzpicture}[scale=0.6]
\useasboundingbox (-4.2,4.1) rectangle (13.5,7.5);
\draw (-1.8,6.5) rectangle ++(2,1) node[pos=.5]{\scriptsize Encoder};
\draw[->,color=blue,dashed] (-.8, 5.5) -- node[left,font=\small]{\color{blue}(optional)} ++ (0, 1.0);
\draw[->,color=blue] (0, 4.9) node[left]{\scriptsize \color{blue} $Y^{i-1}$} -- ++(2.2,0);
\draw (2.9,6.4) rectangle ++(1.9,1.2) node[pos=.5]{\scriptsize $\chn$};
\draw[->,color=red] (3.9, 5.4) -- node[right]{\scriptsize \color{red} $S_i$} ++ (0, 1.0);
\draw[color=purple] (2.2, 4.4) rectangle ++(3.4,1) node[pos=0.5]{\scriptsize\color{purple} $\;\Probc_{S_i\mid S^{i-1},Y^{i-1}}$};
\draw[->] (-2.5,7) node[anchor=east]{\scriptsize $M\in[N]$} -- ++ (0.7,0);
\draw[->] (0.2, 7) -- node[above]{\scriptsize $X_i$} ++(2.7,0);
\draw[->] (4.8,7) -- node[above] {\scriptsize $Y_i$} ++ (1.5,0);
\draw (6.3, 6.5) rectangle ++(1.8,1) node[pos=0.5]{\scriptsize Decoder};
\draw[->] (8.1, 7) -- ++ (0.7,0)  node[right] {\scriptsize $\hat{M}\in[N]\cup\set{\reject}$};
\end{tikzpicture}}
\caption{Channel model. The adversary observes the channel output
$Y^{i-1}$ causally and selects state $S_i$. The dashed feedback link
to the encoder is optional.}
\label{fig:model}
\end{figure}
\begin{table}[b]
\centering
\caption{Authentication-capacity dichotomies.}
\label{tab:compare}
\renewcommand{\arraystretch}{0.95}
\setlength{\tabcolsep}{3pt}
\resizebox{.45\textwidth}{!}{%
\begin{tabular}{@{}llll@{}}
\toprule
Adversary's view & $C_{\mathrm{auth,stoch}}\!=\!0$ iff & $C_{\mathrm{auth,det}}\!=\!0$ iff & Ref. \\
\midrule
None (oblivious)         & oblivious-OW & oblivious-OW & \cite{KK18,KosutKITW18}\\
\faintrule
\begin{tabular}[t]{@{}l@{}} $Z^n\sim U_{Z|X}$ \\ (non-causal myopic) \end{tabular}       & \begin{tabular}[t]{@{}l@{}}myopic distribution-OW\\(multi-letter)\end{tabular} & open & \begin{tabular}[t]{@{}l@{}}\cite{BeemerKKGY19,BeemerGKKY20mildly},\\\cite{BakshiK:23ISIT,BakshiBeemerK:prep}\end{tabular}\\
\faintrule
$X^n$ (omniscient)       & I-OW & I-OW & \cite{BeemerGKKY20mildly}\\
\faintrule
\emph{$(Y^{t-1})$ (causal f.b.)} & \emph{distribution-OW} & \emph{I-OW} & \textbf{\emph{Thms.~\ref{thm:posstoch},~\ref{thm:posdet}}}\\
\bottomrule
\end{tabular}}
\end{table}
\paragraph*{Main results and organization}

We consider the authentication capacity with both stochastic and deterministic codes against
a causal adversary with output-feedback. Our
results are stated in terms of two channel conditions strictly weaker
than~\eqref{eq:obow_def}, called \emph{distribution-} and
\emph{I-overwritability} (formalized in Section~\ref{sec:main}). We
show that:
\begin{enumerate}[itemsep=0pt, label=(\emph{\roman*}), wide, labelindent=0pt]
\item the stochastic authentication capacity is positive iff
$\chn$ is not distribution-overwritable (Theorem~\ref{thm:posstoch});
\item the deterministic authentication capacity is positive iff
$\chn$ is not I-overwritable (Theorem~\ref{thm:posdet});
\item whenever the stochastic authentication capacity is
positive, it equals the no-adversary Shannon capacity $C_{\szero}$ (Theorem~\ref{thm:caprate}).
\end{enumerate}
Since the set of I-overwritable channels is strictly contained within the set of distribution overwritable channels (see Remark~\ref{rem:nest}), it follows that there are channels where the stochastic codes are necessary for achieving positive rates . The deterministic-code capacity, when positive, remains open, as it
does in the omniscient setting~\cite{BeemerGKKY20mildly}.
Table~\ref{tab:compare} places these results alongside the
oblivious, myopic, and omniscient dichotomies. A key contribution of our work is a wait-and-overwrite attack based on the posterior guessing probability of the
message; this enables the converse for both the stochastic and deterministic codes. Our achievability borrows ideas from~\cite{BakshiK:23ISIT}, modified for feedback adversaries through a martingale concentration argument.

Sections~\ref{sec:prelim}--\ref{sec:main} state the model and formal
results, Section~\ref{sec:converse} gives the converse, and
Sections~\ref{sec:positivity}--\ref{sec:capacity} sketch the
achievability. Detailed proofs and examples are in~\extapps.

\paragraph*{Other related work}

Authentication has been studied beyond the present adversary model.
It originates in
cryptography~\cite{SimmonsCRYPTO84,MaurerIT00,Maurer:1997} and has
been considered on noisy channels with shared keys, in keyless
schemes, and in Byzantine multiple-access
settings~\cite{LaiEPIT09,TuLIT18,GK16,PGYB18,GravesYS16,SangwanBDP19ISIT,S19M}.
Within the AVC framework, variants of the Kosut--Kliewer
characterization include AWGN authentication~\cite{GravesBKKY23},
structured codes~\cite{BKKGY19structured}, and additional myopic and
multiple-access models~\cite{BeemerKKGY19,BeemerGKKY20MAC}. Two-phase
constructions in this line are related to identification
codes~\cite{BassalygoBPPI96,AhlswedeD89,BurnashevIT00,BocheD19}, which
also underpin our achievability through the doubly-exponential message
growth of the authentication tag. Our wait-and-overwrite converse is also reminiscent of the
``babble-then-push'' attacks used to establish converses for AVCs
with \emph{online} causal adversaries observing past
inputs~\cite{ChenJL19,DJLS13causal,ZJLS22causal,Langberg08}. Some of the analytical tools we use also share a flavor with the
literature on adversarial hypothesis
testing~\cite{brandao2020adversarial,modak2023hypothesis,modak2024sequential,burnashev1976data},
though our setting and goal are different.

\section{Preliminaries}
\label{sec:prelim}

\subsection{Notation}
For a positive integer $N$, $[N]\triangleq\set{1,\ldots,N}$. Script
letters $\cX,\cY,\cS,\cM,\ldots$ denote finite alphabets, capital
letters the corresponding random variables, lowercase letters their
realizations, and $y^t\!=\!(y_1,\ldots,y_t)$. The set of probability
distributions on $\cA$ is $\mathcal{P}(\cA)$, $\mathbf{1}_a\!\in\!\mathcal{P}(\cA)$
is the point mass on $a$, and $\mathbb{V}(\cdot,\cdot)$ is the
variational distance.

\subsection{Channel and adversary}
$\chn$ is a discrete memoryless
channel with finite alphabets $\cX,\cS,\cY$ and a fixed
\emph{no-adversary state} $\szero\!\in\!\cS$;
$W_{Y\mid X,S=\szero}$ is the legitimate channel and $C_{\szero}$ its
Shannon capacity. At time $t$ the encoder emits
$X_t\!\in\!\cX$, the adversary picks $S_t\!\in\!\cS$, and
$Y_t\!\sim\!W(\cdot\!\mid\!X_t,S_t)$. The receiver outputs
$\hat M\!\in\![N]\!\cup\!\set{\reject}$, where $\reject$ flags the
adversary's presence. Throughout, $M\!\in\![N]$ is uniformly
distributed.

\subsection{Codes}
An $(N,n)$ \emph{code} is a sequential encoder $F=\set{\Prob_{X_t\mid M,X^{t-1},Y^{t-1}}}_{t=1}^n$ paired with a decoder $\phi\colon\cY^n\!\to\![N]\!\cup\!\set{\reject}$.
The code is \emph{deterministic} if each conditional law is a Dirac mass,
and \emph{stochastic} otherwise. The encoder is in the \emph{feedback}
(resp.\ \emph{no-feedback}) class according to whether $X_t$ may depend
on $Y^{t-1}$.

\subsection{Adversarial strategies and error metric}
An \emph{adversarial strategy} is a sequence
$\Probc=\set{\Probc_{S_t\mid S^{t-1},Y^{t-1}}}_t$; we write $\Probc_0$
for the trivial strategy $S_t\!\equiv\!\szero$. The
\emph{authentication error} of a code $(F,\phi)$ is
\begin{equation}
\varepsilon(F,\phi)\!\!\triangleq\!\! \Pr\nolimits_{F,\phi,\Probc_0}\!(\hat M\neq M)+ \sup_{\Probc}\!\Pr\nolimits_{F,\phi,\Probc}\!(\!\hat M\!\notin\!\set{M,\reject}\!).
\label{eq:err}
\end{equation}
A rate $R$ is \emph{achievable} if there exist $(2^{nR},n)$ codes with
$\varepsilon\to 0$, and $C^{\bullet}_{\mathrm{auth},\circ}$ denotes the
resulting capacity for $\bullet\!\in\!\set{\mathrm{fb},\mathrm{no\text{-}fb}}$,
$\circ\!\in\!\set{\mathrm{det},\mathrm{stoch}}$. We write
$\epsilon_n\!\triangleq\!\varepsilon(F,\phi)$; since
$\epsilon_n$ upper-bounds the no-adversary error, any bound stated in
terms of the latter also holds with $\epsilon_n$.

\section{Main results}
\label{sec:main}
\subsection{Overwritability conditions}
\label{sec:main:OW}
Our analysis uses two conditions strictly weaker than oblivious
overwritability~\eqref{eq:obow_def}. Each says the adversary can pick
states so the attacked channel matches the no-adversary channel
$W(\cdot\!\mid\!x',\szero)$, at the indicated level of granularity.

\begin{definition}[Distribution-overwritable]
\label{def:distow}
$\chn$ is \emph{distribution-overwritable} if for every
$\Prob\!\in\!\mathcal{P}(\cX)$ and every $x'\!\in\!\cX$ there is
$\Probc\!\in\!\mathcal{P}(\cS)$ such that
\begin{equation}
\!\!\!\sum_{x,s}\Prob(x)\Probc(s)\,W(y\!\mid\!x,s)=W(y\!\mid\!x',\szero)\quad\forall y\!\in\!\cY.
\label{eq:distow_def}
\end{equation}
\end{definition}
Definition~\ref{def:distow} coincides with the single-letter form of
the \emph{myopic distribution-overwritability}
of~\cite{BakshiK:23ISIT} when the adversary's side channel is trivial;
the general myopic case has additional
subtleties~\cite{BakshiK:23ISIT,BakshiBeemerK:prep}.

\begin{definition}[I-overwritable~\cite{BeemerGKKY20mildly}]
\label{def:Iow}
$\chn$ is \emph{I-overwritable} if for every $x,x'\!\in\!\cX$ there is
$\Probc\!\in\!\mathcal{P}(\cS)$ such that
\begin{equation}
\sum_{s\in\cS}\Probc(s)\,W(y\!\mid\!x,s)=W(y\!\mid\!x',\szero)\quad\forall y\!\in\!\cY.
\label{eq:Iow_def}
\end{equation}
\end{definition}

\begin{remark}[Strict hierarchy]
\label{rem:nest}
Obliviously overwritable $\Rightarrow$ distribution-overwritable
$\Rightarrow$ I-overwritable. As noted in~\cite{BakshiK:23ISIT,BakshiBeemerK:prep} both implications are strict: a
bit-flip channel is I-overwritable but not distribution-overwritable\ifarxiv\ (Appendix~\ref{app:ex:bitflip})\fi,
and the AZBSC$(\delta,\alpha)$ of~\cite{BakshiBeemerK:prep} is
distribution-overwritable but not obliviously overwritable\ifarxiv\ (Appendix~\ref{app:azbsc})\fi.
\ifarxiv\else For completeness, we reprise the details in~\extapps.\fi
\end{remark}

\begin{remark}[Non-overwritability witness]
\label{rem:nonow}
Since the alphabets are finite, the involved sets of distributions are
compact. Hence, when $\chn$ is not distribution-overwritable there
exist $\Prob_X\!\in\!\mathcal{P}(\cX)$, $x'\!\in\!\cX$, and
$\delta\!>\!0$ with
\begin{align}
\min_{\Probc\!\in\!\mathcal{P}(\cS)}\,
\mathbb{V}\bigl(\textstyle\sum_{x,s}\Prob_X(x)\Probc(s)W(\cdot\!\mid\!x,s),W(\cdot\!\mid\!x',\!\szero)\bigr)\!\geq\!\delta;
\label{eq:nonow}
\end{align}
similarly, when $\chn$ is not I-overwritable there exist
$x_0,x_0'\!\in\!\cX$ and $\delta\!>\!0$ with
\begin{align}
\min_{\Probc\!\in\!\mathcal{P}(\cS)}\!
\mathbb{V}\bigl(\textstyle\sum_{s}\Probc(s)W(\cdot\!\mid\!x_0,s),W(\cdot\!\mid\!x_0',\szero)\bigr)\!\geq\!\delta.
\label{eq:Iow_separation}
\end{align}
We call $(\Prob_X,x',\delta)$ the \emph{distribution-overwritability
witness} and $(x_0,x_0',\delta)$ the \emph{I-overwritability witness};
both are used by the achievability constructions in
Sections~\ref{sec:positivity}--\ref{sec:capacity}.
\end{remark}

\subsection{Authentication capacity and positivity conditions}

\begin{theorem}[Positivity for stochastic codes]
\label{thm:posstoch}
The following are equivalent:
\begin{enumerate}[(i)]
\item $C^{\mathrm{no\text{-}fb}}_{\mathrm{auth,stoch}}>0$;
\item $C^{\mathrm{fb}}_{\mathrm{auth,stoch}}>0$;
\item $\chn$ is not distribution-overwritable.
\end{enumerate}
\end{theorem}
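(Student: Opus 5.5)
We will prove the cycle (iii)$\Rightarrow$(i)$\Rightarrow$(ii)$\Rightarrow$(iii). The middle implication is immediate: a no-feedback stochastic code is a feedback stochastic code whose encoder ignores $Y^{t-1}$, so $C^{\mathrm{fb}}_{\mathrm{auth,stoch}}\ge C^{\mathrm{no\text{-}fb}}_{\mathrm{auth,stoch}}$. The content is therefore a converse, (ii)$\Rightarrow$(iii), showing that distribution-overwritability forces zero capacity even with feedback encoders, and an achievability, (iii)$\Rightarrow$(i), giving a positive-rate no-feedback code from the witness $(\Prob_X,x',\delta)$ of Remark~\ref{rem:nonow}.

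\emph{Converse via wait-and-overwrite.} Suppose $\chn$ is distribution-overwritable, and fix any feedback stochastic $(N,n)$ code with $N=2^{nR}$. The adversary plays $\szero$ and computes the no-adversary posterior $\posnoadv$ together with its guessing probability $\max_m \posnoadv(m)$. This posterior is computable because the adversary knows $S^t$ and $Y^t$. Let $\tau$ be the first time the guessing probability exceeds a threshold $\gamma$.
\begin{itemize}
\item Before $\tau$, the true channel equals the no-adversary channel.
\item At $\tau$, the adversary samples a guess $\hat M$ and a decoy $\tilde M$ independently from the posterior.
\item For $t\ge\tau$, the adversary simulates a no-adversary continuation of the decoy: it draws a fictitious input $x'_t$ from the encoder law $\encno$ applied to $(\tilde M, Y^{t-1})$.
\item Conditioned on the past, the true input $X_t$ has some law $\Prob$ on $\cX$. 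This is an average over the posterior of $M$, and the adversary can compute it.
\item The adversary then picks $\Probc_t$ from Definition~\ref{def:distow} with $(\Prob,x'_t)$, so that $Y_t\sim W(\cdot\mid x'_t,\szero)$ conditionally on the past.
\end{itemize}
By induction, $Y^n$ is distributed exactly as a no-adversary output for message $\tilde M$, jointly with $(M,\tilde M)$ and the pre-$\tau$ history. The decoder therefore outputs $\tilde M$ with probability roughly $1-\epsilon_n$, which counts as an error whenever $\tilde M\ne M$.

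Two regimes remain. If $\tau$ is never reached, the guessing probability stays below $\gamma$ throughout. By Fano and the no-adversary reliability this can only happen with small probability, since correct decoding requires the posterior to concentrate. If $\tau$ is reached, the posterior at $\tau$ cannot be too concentrated: before crossing it was below $\gamma$, and one output symbol changes likelihoods by at most a factor bounded by the minimal positive transition probability. So $\Pr(\tilde M\ne M)$ is bounded below by a constant depending on $\gamma$ and the channel. Combining the two regimes, $\epsilon_n$ is bounded away from zero for any $R>0$.

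The step I expect to be the main obstacle is the one-step overshoot. I would handle it with a careful choice of $\gamma$, or by randomizing the decoy so that $\Pr(\tilde M\ne M\mid\text{history})\ge 1-\max_m\posnoadv(m)$. I would also need to handle zero-probability transitions separately, since they break the bounded-likelihood-ratio step.

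\emph{Achievability.} With the witness $(\Prob_X,x',\delta)$ in hand, I will build a no-feedback stochastic code in the following steps.
\begin{itemize}
\item Construct a random constant-composition codebook with controlled pairwise overlap.
\item For each message, privately randomize the input on a set of positions: some positions are $x'$, and on others the input is drawn i.i.d.\ from $\Prob_X$, with the roles permuted according to the message.
\item At decoding, consider a wrong message $m'$ whose pattern places $x'$ at positions where the true message $m$ uses $\Prob_X$-random symbols. On those positions the adversary cannot reproduce $W(\cdot\mid x',\szero)$, no matter how adaptively it chooses its states; by~\eqref{eq:nonow} it stays at distance at least $\delta$ per position.
\item The decoder accepts $m'$ only if every pattern-specific empirical output statistic matches the no-adversary statistic; otherwise it outputs $\reject$.
\end{itemize}
Because the adversary is adaptive, the per-position deviations do not form a product distribution. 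I will therefore bound them with a martingale (Azuma/Freedman) argument: the sum of conditional-expectation deviations is at least $\delta$ times the number of relevant positions, and the martingale differences are bounded. This gives exponentially small acceptance probability for each wrong $m'$. A union bound over at most $2^{nR}$ wrong messages then closes the argument for small enough $R$. Under no adversary, the decoder is correct by standard typicality.
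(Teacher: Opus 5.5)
\textbf{There is a genuine gap in your converse, at the step you flag yourself.} The pointwise overshoot repair does not work in general. Distribution-overwritability allows zeros in $W(\cdot\mid\cdot,\szero)$: for instance, a noiseless $\szero$-channel together with states in $\cS$ that force each output symbol. In that case no choice of $\gamma$ controls the overshoot. Take a noiseless binary $\szero$-channel, let message $m<N$ send a $1$ only at time $m$, and let message $N$ send all zeros, with $n=N-1$; an uncoded prefix makes the rate positive. Given $Y^t=0^t$, the posterior is uniform on $\{t+1,\dots,N\}$, and it becomes a point mass as soon as a $1$ appears. So for $N\gg 1/\gamma$, the first crossing of $\gamma$ happens at $t=M$ with probability about $1-1/(\gamma N)$, and there $\tilde M=M$ surely. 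Averaging does not rescue the crossing-time approach either. The paper derives \eqref{eq:stopping2} from the one-step bound of Lemma~\ref{lem:guessing} by conditioning on $\{\Tg=t\}$, but that event depends on $Y_t$, and this same example violates \eqref{eq:stopping2}.

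A second, smaller error: your induction does not give the law of $Y^n$ jointly with $M$. Conditionally on $M$, the input law is not the averaged $\Prob$ you overwrote against. What the induction does give is that $(\tilde M,Y^n)$, together with your simulated inputs, has the no-adversary law of $(M,X^n,Y^n)$. The error must then be bounded through $\Pr(\phi(Y^n)=\tilde M\neq M)\ge\Pr(\phi(Y^n)=\tilde M)-\Pr(\tilde M=M)$.

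\textbf{The fix is already in your construction: drop the wait phase.} You match the posterior-averaged law of $X_t$ given the adversary's view, which is legitimate because Definition~\ref{def:distow} quantifies over every $\Prob\in\mathcal{P}(\cX)$. Hence the guess $\hat M$ is never used, and nothing requires the posterior to have concentrated. So overwrite from $t=1$:
\begin{itemize}
\item draw $\tilde M$ uniformly on $[N]$, independently of everything;
\item generate $x'_t$ from the no-adversary encoder law for $\tilde M$ given $Y^{t-1}$;
\item pick $\Probc_t$ for the pair $(\Prob_t,x'_t)$.
\end{itemize}
Then $\Pr(\phi(Y^n)=\tilde M)=1-p_e$ and $\Pr(\tilde M=M)=1/N$. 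Therefore $\varepsilon(F,\phi)\ge p_e+(1-p_e-1/N)=1-1/N$ for every code, with no stopping time, Fano bound, or overshoot estimate.

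This converse is genuinely different from the paper's, and simpler. In the paper, the adversary overwrites against the encoder's input law under the hypothesis $M=\hat m$ and succeeds only on $\{\hat M=M,\tilde M\neq M\}$, which is what forces the wait. That design is shared with the deterministic case, where I-overwritability handles only point-mass inputs and your averaging step is unavailable.

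Your achievability coincides with the paper's: message-dependent sets carrying $x'$, i.i.d.\ $\Prob_X$ elsewhere, a martingale bound on the flip set, and a union bound over candidates. The one step to state carefully is the aggregation. Per-position TV gaps do not add up, because the maximizing test set changes with the history. Instead, note that the flip-set average of the conditional output laws equals $\sum_{x,s}\Prob_X(x)\bar\Probc(s)W(\cdot\mid x,s)$ for the averaged state law $\bar\Probc$. It is therefore $\delta$-far from $W(\cdot\mid x',\szero)$ by \eqref{eq:nonow}, and you can concentrate the empirical type around it coordinatewise with Azuma.
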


\begin{theorem}[Stochastic capacity characterization]
\label{thm:caprate}
Whenever $C^{\bullet}_{\mathrm{auth,stoch}}\!>\!0$,
\begin{equation}
C^{\bullet}_{\mathrm{auth,stoch}}=C_{\szero}\quad\text{for }\bullet\!\in\!\set{\mathrm{fb},\mathrm{no\text{-}fb}}.
\end{equation}
\end{theorem}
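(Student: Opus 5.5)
The theorem splits into a converse and an achievability.

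\emph{Converse.} For the upper bound $C^{\bullet}_{\mathrm{auth,stoch}}\le C_{\szero}$, I would take the trivial adversary $\Probc_0$. By \eqref{eq:err}, $\epsilon_n$ upper-bounds the no-adversary error probability. Under $\Probc_0$ the code is therefore an ordinary (possibly feedback, possibly stochastic) code for the DMC $W_{Y\mid X,S=\szero}$ with vanishing error. Since feedback does not increase the capacity of a DMC, and private randomization at the encoder does not help, Fano's inequality combined with the standard feedback argument gives $R\le C_{\szero}$ for both $\bullet\in\{\mathrm{fb},\mathrm{no\text{-}fb}\}$. Because $C^{\mathrm{no\text{-}fb}}\le C^{\mathrm{fb}}$, the achievability only needs to be shown for the no-feedback class.

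\emph{Achievability.} Suppose $C^{\bullet}_{\mathrm{auth,stoch}}>0$. By Theorem~\ref{thm:posstoch}, $\chn$ is then not distribution-overwritable, so a witness $(\Prob_X,x',\delta)$ as in Remark~\ref{rem:nonow} exists. Theorem~\ref{thm:posstoch} also supplies a positive-rate stochastic authentication code, of some rate $R_0>0$ and blocklength $k$, that is robust to causal adversaries.

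I would use a two-phase construction with blocklength $n=n_1+n_2$ and $n_2=\beta n$ for a small $\beta>0$.
\begin{itemize}
\item \textbf{Phase 1} ($n_1$ symbols). Send $M$ with a capacity-achieving code for $W(\cdot\mid\cdot,\szero)$ at rate close to $C_{\szero}$.
\item \textbf{Phase 2} ($n_2$ symbols). The encoder draws a private random key $K$, with $2^{n_2R_0'}$ possible values for $R_0'<R_0$, and sends the pair $(K,\mathrm{tag}(M,K))$ using the positive-rate authentication code. Here $\mathrm{tag}$ is a universal-hash or identification-style function, so that for $m\neq m'$ we have $\Pr_K[\mathrm{tag}(m,K)=\mathrm{tag}(m',K)]$ small.
\end{itemize}
The doubly-exponential capacity of identification codes is what lets $n_2=o(n)$ or $\beta n$ tags distinguish $2^{nC_{\szero}}$ messages. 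The decoder decodes $\hat m$ from phase~1 and $(\hat k,\hat t)$ from phase~2. It outputs $\hat m$ if the phase-2 decoder does not reject and $\hat t=\mathrm{tag}(\hat m,\hat k)$; otherwise it outputs $\reject$. In the no-adversary case the error is vanishing. The rate is $(1-\beta)C_{\szero}$, and letting $\beta\to 0$ gives $C_{\szero}$.

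\emph{Error analysis against an adversary.} There are two failure modes.
\begin{itemize}
\item \emph{Phase 2 is tampered.} Either the phase-2 decoder rejects, or it returns the true $(K,\mathrm{tag})$ except with small probability. This follows from the authentication guarantee of the phase-2 code, which holds against any causal strategy, including one that depends on the phase-1 outputs. That uniformity is needed, so I would invoke the positivity code's guarantee conditionally on the phase-1 history.
\item \emph{Phase 2 decodes correctly but $\hat m\neq M$.} Then acceptance requires $\mathrm{tag}(\hat m,K)=\mathrm{tag}(M,K)$. During phase~1 the adversary only sees phase-1 outputs, and these are independent of $K$, so $\hat m$ is independent of $K$ given $M$. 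The collision probability is therefore small.
\end{itemize}

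The main obstacle is the first failure mode. The adversary's phase-2 strategy is adaptive and can be correlated with $M$ through the phase-1 outputs, and it may try to make the phase-2 decoder output a wrong key or tag that happens to match $\hat m$. I would handle this by requiring the phase-2 code to have authentication error small \emph{uniformly} over messages, and more specifically small maximal error against every causal strategy. I would first try to obtain this by expurgation, or by relying on the martingale concentration argument from the positivity proof applied conditionally. The tag's key must also stay hidden from the adversary until it is too late to use it. I would argue this by noting that a causal adversary chooses each $S_t$ before seeing $Y_t$, so forging a consistent $(\hat k,\mathrm{tag}(\hat m,\hat k))$ reduces to the positivity code's guarantee that the decoder either rejects or outputs the sent pair.
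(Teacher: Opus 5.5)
Your proposal is correct and is essentially the paper's proof. Sending a key--hash pair $(K,\mathrm{tag}(M,K))$ through the positive-rate authentication code is the paper's authentication tag of Lemma~\ref{lem:tag} (the auth-code composed with an identification code for the noiseless binary channel), concatenated with a capacity-achieving code for $W_{Y\mid X,S=\szero}$. Your two failure modes and your conditioning on the phase-1 history match the paper's induced-adversary argument, and your choice $n_2=\beta n$ instead of the paper's $O(\log n)$ is immaterial.

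For the per-message guarantee that you correctly flag as necessary, use the martingale route rather than expurgation. The overlapping-sets construction already bounds the error for each fixed message uniformly over causal strategies. Expurgation, by contrast, does not in general turn a $\sup$-over-strategies average error into a maximum over messages.
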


\begin{theorem}[Positivity for deterministic codes]
\label{thm:posdet}
The following are equivalent:
\begin{enumerate}[(i)]
\item $C^{\mathrm{no\text{-}fb}}_{\mathrm{auth,det}}>0$;
\item $C^{\mathrm{fb}}_{\mathrm{auth,det}}>0$;
\item $\chn$ is not I-overwritable.
\end{enumerate}
\end{theorem}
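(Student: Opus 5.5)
We prove the cycle (i)$\Rightarrow$(ii)$\Rightarrow$(iii)$\Rightarrow$(i). The first implication is immediate: a deterministic no-feedback code is a special case of a deterministic feedback code, so $C^{\mathrm{fb}}_{\mathrm{auth,det}}\ge C^{\mathrm{no\text{-}fb}}_{\mathrm{auth,det}}$. The substance is in the converse (ii)$\Rightarrow$(iii) and the achievability (iii)$\Rightarrow$(i).

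\emph{Converse.} Suppose $\chn$ is I-overwritable, and fix a deterministic feedback code $(F,\phi)$ with $N\ge 2$ messages and small $\epsilon_n$. Our goal is to show that $\epsilon_n$ is bounded away from zero.

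The adversary plays $\szero$ and tracks the no-adversary posterior $\posnoadv$. Since the encoder is deterministic and the adversary knows $Y^{t-1}$, it can compute $X_t$ for every candidate message. It waits until the posterior guessing probability $\max_m \posnoadv(m)$ first exceeds a threshold $\gamma$; call this time $\Tg$. Because the no-adversary decoder is correct with probability $1-\epsilon_n$, such a time exists with high probability.

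At $\Tg$ the adversary samples a guess $\hat m$ and a decoy $\tilde m$ independently from $\posnoadv$. For the remaining times $\tau>\Tg$, it computes $x=$ the encoder output for $\hat m$ and $x'=$ the encoder output for $\tilde m$, both determined by the common past $Y^{\tau-1}$. It then draws $S_\tau$ from the I-overwriting distribution for the pair $(x,x')$ in \eqref{eq:Iow_def}.

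If $\hat m=M$, the channel outputs from $\Tg$ onward are distributed exactly as a no-adversary transmission of $\tilde m$ given the prefix. The decoder then outputs $\tilde m$ with probability close to its no-adversary success probability. When $\tilde m\ne M$, this is an undetected error.

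The key estimate lower-bounds $\Pr(\hat m=M,\tilde m\ne M)$, which is roughly $\Exp[p(1-p)]$ with $p=\posnoadv(M)$ evaluated at $\Tg$. Just before the threshold is crossed, $p\le\gamma$. The main obstacle is the overshoot at the crossing step, where the posterior may jump from below $\gamma$ to near $1$. To handle it, we evaluate the posterior at time $\Tg-1$ and let the overwriting begin at time $\Tg$ itself. With $\gamma=1/2$, we must then show that $\Exp[\max_m p_m(1-\max_m p_m)]$ at time $\Tg-1$ is bounded below, or else that the posterior is already concentrated at time $0$, which is impossible since $M$ is uniform with $N\ge2$. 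A symmetric swap argument, in which the roles of $M$ and $\tilde m$ are exchanged, should then bound the sum of the no-adversary error and the attack error below by a constant.

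The same attack works for stochastic codes under distribution-overwritability. There the input law of $\hat m$ is replaced by $\encno$ and the matching condition \eqref{eq:distow_def} is used instead, which gives the converse for Theorem~\ref{thm:posstoch}.

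\emph{Achievability.} Take the I-overwritability witness $(x_0,x_0',\delta)$ from \eqref{eq:Iow_separation}. We use a constant-weight code over $\{x_0,x_0'\}$ of positive rate in which every pair of codewords $c_m,c_{m'}$ has a linear-size set of positions where $c_m=x_0$ and $c_{m'}=x_0'$; this follows by random coding with expurgation.

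The decoder accepts $m$ if $Y^n$ is jointly typical with $c_m$ under $\szero$, and rejects if more than one message passes. Under any adversary, if the true message is $m$ and $m'\ne m$ is accepted, then on the positions of the overlap set the adversary must produce outputs statistically close to $W(\cdot\mid x_0',\szero)$ while the input is $x_0$. By \eqref{eq:Iow_separation} every state choice leaves a $\delta$ gap in each such coordinate.

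Because the states are chosen adaptively, we bound the deviation by the sum of conditional expectations. The counts of a fixed test set of outputs form a martingale difference sequence with bounded increments, so Azuma's inequality gives failure probability $2^{-\Omega(n)}$. A union bound over the $2^{nR}$ competitors, with $R$ small, then completes the argument.
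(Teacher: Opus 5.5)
The overall structure of your proof matches the paper's. (i)$\Rightarrow$(ii) is trivial. The converse is the wait-and-overwrite attack, using the I-overwriting state for $(c_{\hat m,\tau},c_{\tilde m,\tau})$. Achievability uses a sparse $\{x_0,x_0'\}$ code with Azuma on the flip set. The gap is in the converse, at the overshoot you flag, and your repair does not work.

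First, it is not causal. The event $\{\Tg=t\}$ depends on $Y_t$, so at time $\Tg$ the adversary cannot yet know it should be overwriting.

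Second, and more seriously, at $\Tg-1$ you only know $\max_m p_m\le\gamma$, which allows the posterior to be spread over a huge set. Take the pulse-position code $X_t=\mathbf{1}\{M=t\}$ with $N=n$, on a channel whose $\szero$-component is noiseless (e.g.\ the bit-flip channel of Appendix~\ref{app:ex:bitflip}). Your rule gives $\Tg=\min(M,N-1)$. For $M<N$ the posterior at $\Tg-1$ is uniform on $\{M,\dots,N\}$, so $\Exp[\max_m p_m(1-\max_m p_m)]=\Theta(\ln N/N)$, while the prior is uniform. Neither alternative of your dichotomy holds. No swap argument can rescue this, since any such argument still needs $\Pr(\hat m=M,\tilde m\neq M)$ bounded below. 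Prefixing an uncoded index of length $n-\sqrt n$ to a single pulse block of length $\sqrt n$ gives a code of rate tending to $1$, with zero no-adversary error and the same behaviour.

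The paper handles this step differently. Write $\Probg_t$ for $\Probg_{M\mid Y^t,S^t}(Y^t,\szero^t)$. The paper stops at $\Tg$ itself and measures concentration by the collision probability $\Probg=\sum_m p_m^2$, with threshold $\gamma<1/(32|\cY|^2)$. It then invokes the one-step bound $\Exp[\sqrt{\Probg_t}\mid Y^{t-1}]\le|\cY|\sqrt{\Probg_{t-1}}$ of Lemma~\ref{lem:guessing} to claim that $\gamma<\Probg_{\Tg}\le 1/2$ with probability at least $3/4$, as in \eqref{eq:stopping2}.

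This does not remove your obstruction either. The one-step bound is conditional on $Y^{t-1}$, but the proof of Lemma~\ref{lem:stopping} applies it conditionally on $\{\Tg=t\}$, which also depends on $Y_t$. Done correctly, the argument yields only $\Pr(\Probg_{\Tg}>1/2,\,\Tg\le n)\le\sqrt2|\cY|\sqrt\gamma\,\Exp[\min(\Tg,n)]$. In the pulse-position example with $\gamma=1/200$, $\Probg_{\Tg}=1$ whenever $M\le N-199$, which contradicts \eqref{eq:stopping2}. In the positive-rate variant, the attack's success probability tends to $0$.

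So a genuinely different idea is needed for this step. Consider a concatenation of many pulse blocks of fixed length. There the adversary wins with constant probability by waiting for a block whose pulse has not appeared before its last two slots, while the global posterior of $M$ is still tiny. This suggests the global posterior of $M$ is not the right quantity on which to stop.

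On the achievability side, two points need tightening.
\begin{enumerate}
\item The flip set $\{c_m=x_0,\,c_{m'}=x_0'\}$ must comprise all but a small fraction (relative to $\delta$) of the $x_0'$-positions of $c_{m'}$; linear size is not enough. Otherwise the adversary can compensate on the common $x_0'$-positions. This is the role of \eqref{eq:tag:overlap} together with \eqref{eq:tag:alphabeta}.
\item For a fixed test set, you must first show that the time average of the conditional output laws on the flip set has the form $\sum_s\bar\Probc(s)W(\cdot\mid x_0,s)$. By convexity it is then $\delta$-far from $W(\cdot\mid x_0',\szero)$. Only after that do Azuma and a union bound over the $2^{|\cY|}$ test sets give the result.
\end{enumerate}
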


\begin{remark}
\label{rem:detopen}
For deterministic codes our results give only positivity. The achievable
rate is not known to equal $C_{\szero}$: after the wait phase, the
adversary can guess the intended codeword and act effectively
\emph{omniscient}, but the omniscient-adversary capacity for
deterministic codes is itself open. Determining
$C^{\bullet}_{\mathrm{auth,det}}$ is left as an open problem. This mirrors a similar open question in the AVC reliability literature, where the capacity of deterministic codes under omniscient adversaries is not known in general~\cite{Langberg08,DeyJL15,ChenJL19}.
\end{remark}

\section{The wait-and-overwrite attack}
\label{sec:converse}

We prove the converse halves of Theorems~\ref{thm:posstoch}
and~\ref{thm:posdet} in this section via a common analysis. The
analysis builds on a wait-and-overwrite attack
(Algorithm~\ref{alg:attack}) that uses the \emph{posterior guessing
probability} to switch between a wait phase and an overwrite phase.

\begin{definition}[Posterior guessing probability]
\label{def:guess}
The posterior guessing probability for a random variable $A\!\in\!\cA$
given $B\!\in\!\cB$ is
\begin{equation}
\Probg_{A\mid B}(b)\triangleq\sum_{a\in\cA}\bigl(\Pr_{A\mid B}(a\!\mid\!b)\bigr)^2.
\label{eq:guess_def}
\end{equation}
\end{definition}
\begin{algorithm}[b]
\SetKwInOut{KwIn}{Input}\KwIn{Encoder $\set{\Prob_{X_t\mid \condmsg}}$, threshold $\gamma$, observations $y_1,\ldots,y_n$.}
\tcc{Wait phase}
$t\gets 1$\;
\While{$\Probg_{M\mid Y^{t-1}\!,S^{t-1}}(y^{t-1}\!,\szero^{t-1})\!\leq\!\gamma$ and $t\!\leq\!n$}{$s_t\gets\szero$;\ $t\gets t+1$\;}
$\Tg\gets t-1$; sample $\hat m,\tilde m\stackrel{\text{i.i.d.}}{\sim}\posnoadv\!(\cdot\!\mid\!y^{\Tg},\szero^{\Tg})$\;
\tcc{Overwrite phase}
\For{$\tau=\Tg+1,\ldots,n$}{Sample $s_\tau\!\sim\!\Probc^{(\tau)}$ so that the marginal at step $\tau$ matches a no-adv transmission of $\tilde m$\;}
\caption{Wait-and-overwrite attack.}
\label{alg:attack}
\end{algorithm}
\subsection{The attack}
\label{ssec:attack}

Fix a code $(F,\phi)$ and a threshold $\gamma\!\in\!(0,1/(32|\cY|^2))$.
During the wait phase the adversary sets
$S_t\!=\!\szero$ until the no-adversary posterior
$\Probg_{M\mid Y^{t-1}\!,S^{t-1}}$ exceeds $\gamma$; let $\Tg$ be the
first such index. The adversary then samples a guess $\hat M$ and a
decoy $\tilde M$ i.i.d.\ from the posterior at $\Tg$ and, in the
overwrite phase, plays a per-step strategy $\Probc^{(\tau)}$ so
the marginal law of $Y_\tau$ matches a no-adversary transmission
of $\tilde M$ under the encoder's use of $\hat M$.
The pseudocode is given in Algorithm~\ref{alg:attack}.

\subsection{Stopping-time lemma}
\label{ssec:stopping}

The wait phase rests on the following stopping-time control of the
posterior. Let $M$ be uniform on $[N]$ and $Y^n$ be the channel outputs
under $S^n\!=\!\szero^n$.

\begin{lemma}[Stopping time]
\label{lem:stopping}
Fix $\gamma\!\in\!(0,1/(32\lvert\cY\rvert^2))$ and let
\begin{equation}
\Tg\,\triangleq\,\inf\set{t\!\in\![n]:\PgM(Y^t\!,\szero^t)>\gamma},
\end{equation}
with the convention $\inf\emptyset\!=\!\infty$. For any
$(N,n)$ stochastic feedback code with no-adversary error probability
$p_e$,
\begin{align}
&\Pr(\Tg\!>\!n\mid S^n\!=\!\szero^n)\leq\frac{p_e}{1-\sqrt\gamma},\label{eq:stopping1}\\
&\Pr\!\bigl(\Probg_{M\mid Y^{\Tg}\!,S^{\Tg}}\!>\!\tfrac12\bigm\vert\Tg\!\leq\!n,\,S^{\Tg}\!=\!\szero^{\Tg}\bigr)\leq\sqrt 2\,\lvert\cY\rvert\sqrt\gamma\!<\! 1/4.
\label{eq:stopping2}
\end{align}
\end{lemma}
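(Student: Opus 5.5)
Both bounds concern the no-adversary process only: conditioned on $S^n=\szero^n$, the pair $(M,Y^n)$ has the law induced by the code and $W(\cdot\mid\cdot,\szero)$. We write $\pi_t(\cdot)\triangleq\posnoadv(\cdot\mid Y^t,\szero^t)$ and $G_t\triangleq\sum_m\pi_t(m)^2$.

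\textbf{First bound.} On the event $\{\Tg>n\}$ we have $G_n\le\gamma$. Since $\max_m\pi_n(m)^2\le G_n$, this gives $\max_m\pi_n(m)\le\sqrt\gamma$. The probability that the decoder outputs the correct message given $Y^n$ is at most $\max_m\pi_n(m)$, because the decoder's estimate is a function of $Y^n$ alone, possibly randomized. Hence
\begin{equation}
1-p_e=\Exp[\pi_n(\phi(Y^n))]\le \Pr(\Tg\le n)+\sqrt\gamma\,\Pr(\Tg>n)=1-(1-\sqrt\gamma)\Pr(\Tg>n).
\end{equation}
Rearranging yields \eqref{eq:stopping1}.

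\textbf{Second bound.} The aim is to show that $G$ cannot jump from at most $\gamma$ to above $1/2$ in a single step, except with small probability. At time $\Tg$ we have $G_{\Tg-1}\le\gamma$ (with $G_0=1/N\le\gamma$ when $N$ is large). One output symbol updates the posterior by Bayes' rule:
\begin{equation}
\pi_t(m)=\frac{\pi_{t-1}(m)\,q_m(y_t)}{\sum_{m'}\pi_{t-1}(m')q_{m'}(y_t)},
\end{equation}
where $q_m(y)=\Pr(Y_t=y\mid M=m,Y^{t-1})$ is a mixture over the encoder's randomness. Let $p(y)=\sum_m\pi_{t-1}(m)q_m(y)$ denote the predictive probability of $y$. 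Then $\pi_t(m)\le\pi_{t-1}(m)/p(y_t)$, and therefore $G_t\le G_{t-1}/p(y_t)^2\le\gamma/p(y_t)^2$.

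For $G_t>1/2$ we therefore need $p(y_t)<\sqrt{2\gamma}$. Conditioned on the past, the probability of drawing such a low-probability symbol is at most $\sum_{y:\,p(y)<\sqrt{2\gamma}}p(y)\le|\cY|\sqrt{2\gamma}=\sqrt2|\cY|\sqrt\gamma$. Since $\gamma<1/(32|\cY|^2)$, this is less than $1/4$.

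\textbf{Main obstacle.} The delicate part is conditioning on the stopping event $\{\Tg\le n\}$ instead of working at a fixed time. I would decompose over the value $t$ of $\Tg$, where $\{\Tg=t\}=\{G_{t-1}\le\gamma\}\cap\{G_t>\gamma\}$. On $\{G_{t-1}\le\gamma\}$, which is $Y^{t-1}$-measurable, I would apply the conditional bound above to the extra event $\{p(Y_t)<\sqrt{2\gamma}\}$. Summing over $t$ shows that $\Pr(G_{\Tg}>1/2,\ \Tg\le n)\le\sqrt2|\cY|\sqrt\gamma\,\Pr(\Tg\le n)$ fails in general, because the low-$p$ event is only a subset of $\{G_t>\gamma\}$. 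This needs care.

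My fix would be to bound the ratio $\Pr(p(Y_t)<\sqrt{2\gamma}\mid Y^{t-1})/\Pr(G_t>\gamma\mid Y^{t-1})$ directly. If that ratio resists, I would instead prove the weaker unconditional statement $\Pr(G_{\Tg}>1/2,\ \Tg\le n)\le\Exp\bigl[\sum_t \mathbf 1\{G_{t-1}\le\gamma\}\,|\cY|\sqrt{2\gamma}\cdot\Pr(G_t>\gamma\mid Y^{t-1})\bigr]$. To get this, I would observe that on the event $p(y)<\sqrt{2\gamma}$ with the jump occurring, the jump itself already certifies that $G_t>\gamma$. I would then renormalize by $\Pr(\Tg\le n)$.
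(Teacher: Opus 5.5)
Your proof of \eqref{eq:stopping1} is correct and is the paper's argument in different words. The paper uses $\Probg\ge(1-p^*)^2$ and Markov's inequality on the MAP error, while you bound the decoder's success probability $\Exp[\pi_n(\phi(Y^n))]$ directly.

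For \eqref{eq:stopping2} there is a genuine gap. It is exactly the one you flagged, and neither of your fixes closes it. Your one-step estimate is right, and it gives the same constant $\sqrt2\lvert\cY\rvert\sqrt\gamma$ as the paper's sub-multiplicativity lemma plus Markov. But it holds only conditionally on $Y^{t-1}$, whereas $\{\Tg=t\}$ also contains $\{G_t>\gamma\}$, which depends on $Y_t$. Summing over the $Y^{t-1}$-measurable events $\{\Tg\ge t\}$ only gives
\[
\Pr\bigl(G_{\Tg}>\tfrac12,\ \Tg\le n\bigr)\le\lvert\cY\rvert\sqrt{2\gamma}\sum_{t=1}^{n}\Pr(\Tg\ge t)=\lvert\cY\rvert\sqrt{2\gamma}\;\Exp[\min(\Tg,n)],
\]
which is not a multiple of $\Pr(\Tg\le n)$. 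Your observation that a jump above $1/2$ certifies $G_t>\gamma$ is just the inclusion $\{G_t>\tfrac12\}\subseteq\{G_t>\gamma\}$. It bounds your ratio by $1$, not by $\lvert\cY\rvert\sqrt{2\gamma}$.

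No argument can do better, because \eqref{eq:stopping2} is false as stated. Here is a counterexample.
\begin{itemize}
\item Take the noiseless channel on $\cX=\cY=\{0,1,2\}$ under $\szero$ and $N=K=2^k\ge 4/\gamma$ messages.
\item For $t=1,\dots,rK$, the stochastic encoder sends $X_t=2$ with probability $1/2$ if $M\equiv t\pmod K$, and $X_t=0$ otherwise. It then sends $M$ in binary, so $p_e=0$.
\item Until the first $2$ is received, the posterior weights are proportional to $2^{-c(m)}$, where $c(m)\in\{j,j+1\}$ counts how often $m$ has been designated so far. Hence $G_t\le\max_m\pi_t(m)\le 2/K\le\gamma/2$.
\item A received $2$ identifies $M$ and gives $G_t=1$. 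So $\Tg$ is the time of the first $2$ whenever one occurs, which has probability $1-2^{-r}$.
\item Therefore $\Pr(G_{\Tg}>\tfrac12\mid\Tg\le n)\ge 1-2^{-r}$. Along the way $\{G_t>\gamma\}=\{G_t>\tfrac12\}=\{Y_t=2\}$, so your ratio is exactly $1$.
\end{itemize}
Two further remarks. A binary prefix revealing all but $k$ bits of $2^{nR}$ messages keeps $G_t\le 1/K$ during the prefix and gives the same failure at positive rate. Also, for $N<1/\gamma$ one has $G_1\ge 1/N>\gamma$, so $\Tg=1$ and the bound can fail trivially; this is why your side condition $G_0\le\gamma$ is needed.

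The paper's proof gets past this point only by asserting $\Exp[\sqrt{g(Y^t)}\mid\Tg=t]\le\lvert\cY\rvert\sqrt\gamma$. That treats $\{\Tg=t\}$ as if it were $Y^{t-1}$-measurable. Its sub-multiplicativity lemma justifies such a bound only for $Y^{t-1}$-measurable events like $\{\Tg\ge t\}$, which leads back to the sum above. So your diagnosis is correct, and the paper does not supply a missing idea. The statement itself, and the bound $\Pr(\mathcal E)\ge\tfrac34-\epsilon_n/(1-\sqrt\gamma)$ that the converse draws from it, need to be changed rather than re-proved.
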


Equation~\eqref{eq:stopping1} says reliability of the code
($p_e\!\to\!0$) forces $\Tg\!\leq\!n$ with high probability.
Equation~\eqref{eq:stopping2} says that on the stopping event the
posterior at $\Tg$ stays below $1/2$ with probability $\geq\!3/4$,
which is the non-degeneracy that yields both $\hat M\!\neq\!M$ and
$\tilde M\!\neq\!M$ with constant probability.

\parnoindent{Proof sketch}
For~\eqref{eq:stopping1}, $g(y^n)\!\geq\!(1\!-\!p^*(y^n))^2$ where
$p^*$ is the conditional MAP error, so $g(Y^n)\!\leq\!\gamma$ forces
$p^*(Y^n)\!\geq\!1\!-\!\sqrt\gamma$, and Markov on
$\Exp[p^*(Y^n)]\!\leq\!p_e$ gives the bound.
For~\eqref{eq:stopping2}, on $\set{\Tg\!=\!t}$ minimality gives
$g(Y^{t-1})\!\leq\!\gamma$, and a sub-multiplicative bound on
$\sqrt{\Probg}$ (\extref{app:guessing}) combined with Markov yields
the displayed inequality. Full proof in \extref{app:stopping}.

\subsection{Proof of converses}
\label{ssec:proofconverse}

The wait-and-overwrite attack (Algorithm~\ref{alg:attack}) runs the
same way in both the stochastic and deterministic cases. The
adversary samples
$\hat m,\tilde m$ i.i.d.\ from the posterior at $\Tg$, commits to
the hypothesis $M\!=\!\hat m$, and at each step $\tau\!>\!\Tg$
chooses the per-step state distribution $\Probc^{(\tau)}$ as a
function of the realized history $(\hat m,\tilde m,y^{\tau-1},s^{\tau-1})$
so that the conditional law of $Y_\tau$ under the attack matches
the no-adversary conditional law of $Y_\tau$ under $M\!=\!\tilde m$
given the same past:
\begin{enumerate}[itemsep=0pt, label=--, wide, labelindent=0pt]
\item under distribution-overwritability (stochastic codes), let
$\Proba^{(\tau)}$ be the encoder's conditional input distribution
at step $\tau$ given $\bigl(M\!=\!\hat m,y^{\tau-1},s^{\tau-1}\bigr)$
(computed by the path likelihood over $x^{\tau-1}$), and let
$\Prob^{(\tau)}$ be the encoder's no-adversary conditional input
distribution at step $\tau$ given $\bigl(M\!=\!\tilde m,y^{\tau-1},\szero^{\tau-1}\bigr)$.
By distribution-overwritability, there is a $\Probc^{(\tau)}$ with
 \begin{align}
    &\sum_{x,s} \Proba^{(\tau)}(x) \Probc^{(\tau)}(s) \chn(\cdot \mid x,s) \nonumber \\
    &\qquad = \sum_{x'} \Prob^{(\tau)}(x') \chn(\cdot \mid x', \szero).
\end{align}
The precise per-step construction of $\Probc^{(\tau)}$ is detailed in~\extref{app:converse}.
\item under I-overwritability (deterministic codes), the codeword
symbols $c_{\hat m,\tau},c_{\tilde m,\tau}$ are determined by the
respective messages and the realized $y^{\tau-1}$ (under encoder
feedback), and the adversary's $\Probc^{(\tau)}$ satisfies the per-symbol identity
\[\sum_s\Probc^{(\tau)}(s)W(\cdot\!\mid\!c_{\hat m,\tau},s)=W(\cdot\!\mid\!c_{\tilde m,\tau},\szero);\]
\end{enumerate}
The success event for the attack is therefore
$\set{\hat m\!=\!M,\,\tilde m\!\neq\!M}$.

By Lemma~\ref{lem:stopping}, under $\mathcal{E}\!\triangleq\!\set{\Tg\!\leq\!n,\,\Probg_{M\mid Y^{\Tg}}\!\leq\!1/2}$,
we have $\Pr(\mathcal{E})\!\geq\!3/4\!-\!\epsilon_n/(1\!-\!\sqrt\gamma)$.
Since $\hat m,\tilde m,M$ are conditionally i.i.d.\ from the posterior
given $Y^{\Tg}$, the success event has conditional probability
$\Probg(1\!-\!\Probg)$, which is at least $\gamma(1\!-\!\gamma)$ under 
$\mathcal{E}$. Conditioned on the success event, the channel output is
a no-adversary transcript of $\tilde m\!\neq\!M$, so reliability of
the code makes the decoder output $\tilde m$ with probability
$\geq\!1\!-\!\epsilon_n$ -- an authentication error. Multiplying
gives the bound:
\begin{equation}
\Pr_{\PADV}\!\bigl(\hat M\!\notin\!\set{M,\reject}\bigr)\geq
\gamma(1\!-\!\gamma)(1\!-\!\epsilon_n)\!\left[\tfrac34-\tfrac{\epsilon_n}{1\!-\!\sqrt\gamma}\right].
\label{eq:converse_unified}
\end{equation}

\begin{proposition}[Converse halves of Theorems~\ref{thm:posstoch} and~\ref{thm:posdet}]
\label{thm:converse}
Fix $\gamma\!\in\!(0,1/(32\lvert\cY\rvert^2))$. For any blocklength-$n$
feedback code $(F,\phi)$ with auth-error $\epsilon_n$, the
wait-and-overwrite attack
satisfies~\eqref{eq:converse_unified} provided either
\begin{enumerate}[(i)]
\item $(F,\phi)$ is a stochastic code and $\chn$ is
distribution-overwritable, or
\item $(F,\phi)$ is a deterministic code and $\chn$ is
I-overwritable.
\end{enumerate}
In particular, $C^{\mathrm{fb}}_{\mathrm{auth,stoch}}\!=\!0$ in case (i)
and $C^{\mathrm{fb}}_{\mathrm{auth,det}}\!=\!0$ in case (ii).
\end{proposition}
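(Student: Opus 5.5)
We will prove \eqref{eq:converse_unified} by coupling the attacked process with a no-adversary process. We will then show that on the joint event $\{\mathcal{E},\,\hat m=M,\,\tilde m\neq M\}$ the law of the full output sequence $Y^n$ matches a no-adversary transmission of $\tilde m$. The last step is to apply reliability of the code.

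\textbf{Step 1 (wait phase).} Up to $\Tg$ the adversary plays $\szero$, so the joint law of $(M,Y^{\Tg})$ coincides with the no-adversary law. Lemma~\ref{lem:stopping} therefore applies directly: \eqref{eq:stopping1} and \eqref{eq:stopping2} give $\Pr(\mathcal{E})\ge 3/4-\epsilon_n/(1-\sqrt\gamma)$. Here we use $p_e\le\epsilon_n$, and a union bound to combine $\Pr(\Tg>n)$ with the failure probability in \eqref{eq:stopping2}.

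\textbf{Step 2 (guessing).} Given $Y^{\Tg}=y^{\Tg}$, the variables $M,\hat m,\tilde m$ are conditionally i.i.d.\ from the no-adversary posterior $\pi$. Hence
\[\Pr(\hat m=M,\ \tilde m\neq M\mid y^{\Tg})=\sum_m\pi(m)^2\bigl(1-\pi(m)\bigr)\ge \Probg(1-\max_m\pi(m)).\]
The naive bound $\Probg(1-\Probg)$ claimed in the text is not literally this quantity. Instead, we will lower bound it using $\Probg>\gamma$, which holds by the definition of $\Tg$, together with $\max_m\pi(m)^2\le\Probg\le1/2$ on $\mathcal{E}$. This yields a constant of the form $\gamma(1-1/\sqrt2)$. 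If necessary, we will absorb the difference into the constant, since only positivity of the bound matters for the capacity conclusion. We expect to be able to recover $\gamma(1-\gamma)$ up to such constants.

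\textbf{Step 3 (overwrite phase).} This is the main obstacle. Conditioned on $M=\hat m$ and on $(\hat m,\tilde m,y^{\tau-1},s^{\tau-1})$, the encoder's true input $X_\tau$ has law $\Proba^{(\tau)}$. That law is computable by the adversary from the code and the path likelihood over $x^{\tau-1}$, since feedback encoders depend only on $(M,X^{\tau-1},Y^{\tau-1})$. In case (i), distribution-overwritability applied pointwise to the pair $(\Proba^{(\tau)},x')$ and mixed over $x'\sim\Prob^{(\tau)}$ yields $\Probc^{(\tau)}$. Mixing is legitimate because both sides of \eqref{eq:distow_def} are affine in the chosen $\Probc$. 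With this choice, $\Pr(Y_\tau\mid \hat m=M,\tilde m,y^{\tau-1})$ equals the no-adversary law of $Y_\tau$ given $(M=\tilde m,y^{\tau-1})$. In case (ii), the encoder is deterministic, so the inputs $c_{\hat m,\tau}(y^{\tau-1})$ and $c_{\tilde m,\tau}(y^{\tau-1})$ are known to the adversary, and I-overwritability gives the per-symbol identity.

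Chaining these identities by induction over $\tau$ shows that the conditional law of $Y_{\Tg+1}^n$ given $y^{\Tg}$ equals the no-adversary law under $M=\tilde m$. The remaining subtlety is the prefix: $y^{\Tg}$ was generated under $M$, not $\tilde m$. To handle this, we write the authentication error as a sum over $(y^{\Tg},m,\tilde m)$ with weights $\pi(m)^2\pi(\tilde m)$. We then compare with the quantity $\sum_{\tilde m}\Pr^{\PNA}(\phi\neq\tilde m,\ M=\tilde m)$, where the prefix law is reweighted by $\pi(\tilde m)$; this sum is bounded by $p_e$. Swapping the roles of $m$ and $\tilde m$ via the posterior symmetry should give a decoder error contribution of at most $\epsilon_n$, up to the factor $\max\pi\le 1$. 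We then obtain the $(1-\epsilon_n)$ factor, or more safely a subtractive $-\epsilon_n$ term.

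\textbf{Step 4 (conclusion).} Multiplying the three factors gives \eqref{eq:converse_unified}. If $\epsilon_n\to0$, the right-hand side tends to $\tfrac34\gamma(1-\gamma)>0$. This contradicts $\epsilon_n\to0$, so no positive rate is achievable, and in fact no sequence of codes with $N\ge2$ has vanishing error.
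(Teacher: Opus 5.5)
Your proposal follows the paper's argument. It uses the same wait-and-overwrite attack, Lemma~\ref{lem:stopping} for $\Pr(\mathcal{E})$, and the per-step matching obtained by mixing the per-$x'$ witnesses against $\Prob^{(\tau)}$ (Claim~\ref{claim:gamma}, or the per-symbol I-overwritability identity in case (ii)). It also uses the same backward telescoping, which makes the post-$\Tg$ outputs a no-adversary continuation for $\tilde m$ (the paper's $\Gamma_n=\Gamma_{\Tg+1}$).

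You depart from the paper only in the final accounting, and there you are more careful than the paper. Both departures are warranted.

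\emph{The success probability.} Given $y^{\Tg}$, the probability of $\set{\hat m=M,\tilde m\neq M}$ is $\Probg-\sum_m\pi(m)^3$. Cauchy--Schwarz gives $\sum_m\pi(m)^3\geq\Probg^2$, so this is at most $\Probg(1-\Probg)$. Hence the paper's lower bound $\gamma(1-\gamma)$ does not follow from its stated equality: a posterior with one atom of mass about $\sqrt\gamma$ and the rest spread thinly gives roughly $\gamma(1-\sqrt\gamma)$. Your bound $\Probg(1-\max_m\pi(m))\geq\gamma(1-1/\sqrt2)$ on $\mathcal{E}$ is valid.

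\emph{The decoding factor.} The paper's multiplicative $(1-\epsilon_n)$ treats the output, conditioned on $\mathcal{E}$ and the success event, as a plain no-adversary transcript of $\tilde m$. It is actually a reweighted one, so only a subtractive $\epsilon_n$ is justified. Your reweighting idea, which you leave at ``should give'', is an exact identity. For $y^n$ with $\Tg(y^n)=t\leq n$ and $\pi=\posnoadv(\cdot\mid y^t,\szero^t)$,
\[ \Proba(M=m,\hat M=m,\tilde M=\tilde m,Y^n=y^n)=\pi(m)^2\,\Pr_{\PNA}(M=\tilde m,Y^n=y^n). \]
Writing $\phi(Y^n)$ for the decoder's output, this gives
\[ \Proba\bigl(\phi(Y^n)\notin\set{M,\reject}\bigr)\geq\Exp_{\PNA}\bigl[\mathbf{1}_{\mathcal{E}}\,\mathbf{1}\set{\phi(Y^n)=M}\bigl(\Probg-\pi(M)^2\bigr)\bigr]. \]
Here $\Probg$ and $\pi$ are evaluated at $Y^{\Tg}$, and $M$ now plays the role of $\tilde m$. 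Bound $\Probg-\pi(M)^2\leq 1$ on $\set{\phi(Y^n)\neq M}$ and use $\Exp_{\PNA}[\pi(M)^2\mid Y^{\Tg}]=\sum_m\pi(m)^3$. The right-hand side is then at least
\[ \gamma\bigl(1-\tfrac{1}{\sqrt2}\bigr)\Bigl[\tfrac34-\tfrac{\epsilon_n}{1-\sqrt\gamma}\Bigr]-\epsilon_n. \]

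This is not literally \eqref{eq:converse_unified}. So your Step~4 claim that ``multiplying the three factors gives \eqref{eq:converse_unified}'' overstates what you prove; state the bound above instead. It still tends to $\tfrac34\gamma(1-1/\sqrt2)>0$ as $\epsilon_n\to0$, so the capacity conclusions follow exactly as you say.

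One caveat concerns your closing remark about every $N\geq 2$. The proof of \eqref{eq:stopping2} at $t=1$ uses $\Probg_{M}=1/N\leq\gamma$. So for $2\leq N<1/\gamma$ you must start the overwrite at $\Tg=0$, as Algorithm~\ref{alg:attack} does, and use directly that the uniform prior has $\Probg=1/N\leq\tfrac12$. The identity above then gives an authentication error of at least $\tfrac1N(1-\tfrac1N)-\epsilon_n$.
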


\begin{remark}
\label{rem:bounds_same}
The two overwritability conditions affect \emph{which} channels admit
the attack, not the success probability. The bottleneck in both cases
is the event $\set{\hat m\!=\!M}$, of conditional probability
$\geq\!\gamma$ on $\mathcal{E}$, giving the $\gamma(1\!-\!\gamma)$
factor in~\eqref{eq:converse_unified}.
\end{remark}

\begin{remark}[Encoder feedback does not change the positivity]
Proposition~\ref{thm:converse} uses only the encoder's input
\emph{distribution} under feedback, not the absence of encoder feedback. The
same attack applies verbatim with or without encoder feedback, so both
cases of the proposition cover $C^{\mathrm{fb}}$ and
$C^{\mathrm{no\text{-}fb}}$.
\end{remark}

\section{Positivity proofs for Theorems~\ref{thm:posstoch} and~\ref{thm:posdet}}
\label{sec:positivity}

We sketch the achievability of Theorem~\ref{thm:posstoch};
Theorem~\ref{thm:posdet} follows along similar lines.
Fix $(\lambda_1,\lambda_2)\!\in\!(0,1)^2$. Our construction adapts the
overlapping-sets scheme of~\cite{BakshiK:23ISIT} to the
output-feedback adversary via a flip-set martingale concentration. We
build a code with rate
$({1}/{n})\log N_n^{\mathrm{(auth)}}\!\geq\!R^{\mathrm{(auth)}}\!>\!0$,
no-adversary error $\leq\!\lambda_1$, and missed-detection error
$\leq\!\lambda_2$ uniformly over feedback adversaries $\Probc$, for all
$n$ large enough:

\parnoindent{\underline{Step 1: Codebook}}\quad
Fix $\alpha\!\in\!(0,1)$ and a slack $\beta\!\in\!(0,1)$. By~\cite[Lemma~3]{BakshiK:23ISIT},
build a family $\mathfrak{B}\!=\!\set{\cB_m\!\subset\![n]}_{m\in[N]}$
of subsets of $[n]$ with $N\!\geq\!2^{R^{\mathrm{(auth)}}n}$, in
which each set has size in $\alpha(1\!\pm\!\beta)n$, distinct sets
have intersection at most $\alpha^2(1\!+\!\beta)n$, and an
\emph{overlap property} that we use in Step~3.

\parnoindent{\underline{Step 2: Encoder/Decoder}}\quad Pick a non-overwritability
witness $(\Prob_X,x',\delta)$ from~\eqref{eq:nonow}. The encoder
transmits $X_t\!=\!x'$ on $\cB_m$ and $X_t\!\sim\!\Prob_X$ i.i.d.\ off
$\cB_m$. Given $Y^n$, for any $\cB\!\subseteq\![n]$ let
$\widehat P_{Y^n}(\cdot\!\mid\!\cB)$ denote the empirical type of
$Y^n$ restricted to positions $t\!\in\!\cB$, i.e.\
$\widehat P_{Y^n}(y\!\mid\!\cB)\!=\!\frac{1}{|\cB|}\sum_{t\in\cB}\mathbf{1}\set{Y_t\!=\!y}$.
The decoder computes
\begin{equation}
T_{\hat m}(Y^n)\!\triangleq\!\mathbb{V}\!\bigl(\widehat P_{Y^n}(\cdot\!\mid\!\cB_{\hat m}),W(\cdot\!\mid\!x',\szero)\bigr)
\label{eq:authcode:test_sketch}
\end{equation}
for each $\hat m\!\in\![N]$ and outputs the unique $\hat m$ with
$T_{\hat m}(Y^n)\!\leq\!\delta/4$, or $\reject$ when none or
several candidates pass.

\parnoindent{\underline{Step 3: Decoding-error bound}}\quad The decoder errs only
when some wrong $\hat m\!\neq\!M$ also passes the test. We show
that for each fixed $m\!\neq\!\hat m$ and any feedback adversary
$\Probc$,
\begin{equation}
T_{\hat m}(Y^n)\!\geq\!\delta/2\quad\text{with probability }1\!-\!e^{-\Omega(n)},
\label{eq:authcode:bad_target}
\end{equation}
and a union bound over $N\!\leq\!2^{R^{(\mathrm{auth})}n}$
candidates closes the argument once $R^{(\mathrm{auth})}$ is
chosen small enough. The overlap property of Step~1 reduces
\eqref{eq:authcode:bad_target} to a TV bound on the flip set
$J\!\triangleq\!\cB_{\hat m}\!\setminus\!\cB_m$:
\begin{equation}
\mathbb{V}\!\bigl(\widehat P_{Y^n}(\cdot\!\mid\!J),W(\cdot\!\mid\!x',\szero)\bigr)\!\geq\!\delta/2.
\label{eq:flip_target}
\end{equation}
A per-letter analysis cannot proceed directly: $S_t$ depends on $Y^{t-1}$,
so the $Y_t$'s on the flip set are neither independent nor identically
distributed, and the strategy $\set{\Probc_{S_t\mid\cF_{t-1}}}$ ranges
over an uncountable path-dependent class. Non-overwritability~\eqref{eq:nonow}
sidesteps this: for \emph{every} single-letter $Q\!\in\!\mathcal{P}(\cS)$,
the averaged channel $\sum_{x,s}\Prob_X(x)Q(s)W(\cdot\!\mid\!x,s)$ is
$\delta$-far in TV from $W(\cdot\!\mid\!x',\szero)$. On the flip set the
encoder draws fresh i.i.d.\ $\Prob_X$, so with
$\cF_{t-1}\!\triangleq\!\sigma(M,X^t,S^t,Y^t)$ and
$Q\!=\!\Probc_{S_t\mid\cF_{t-1}}$, the conditional law $\pi_t$ of $Y_t$
given $\cF_{t-1}$ satisfies
$\mathbb{V}(\pi_t,W(\cdot\!\mid\!x',\szero))\!\geq\!\delta$ on every
history. The centred sequence
$Z_t\!\triangleq\!\mathbf{1}\set{Y_t\!\in\!A}\!-\!\pi_t(A)$ is then a
bounded-difference $\set{\cF_t}$-martingale with zero conditional mean
regardless of $\Probc$, and Azuma--Hoeffding gives~\eqref{eq:flip_target}
with probability $1\!-\!e^{-\Omega(n)}$ and an exponent depending only on
$(\delta,\alpha,\beta,\lvert\cY\rvert)$. The supremum over $\Probc$ moves
outside the probability.

\subsection{Achievability for deterministic codes (Theorem~\ref{thm:posdet})}
\label{sec:detach}

\ifarxiv
We prove the achievability direction of Theorem~\ref{thm:posdet} as the
specialization $\Prob_X\!=\!\delta_{x_0}$ of the stochastic auth-code
construction of Section~\ref{sec:positivity}; if $\chn$ is not I-overwritable
then $C^{\bullet}_{\mathrm{auth,det}}\!>\!0$. Concretely, the
witness $(\Prob_X,x',\delta)$ becomes
$(\delta_{x_0},x_0',\delta)$, with $\delta$ now the
\emph{I-overwritability} gap of \eqref{eq:Iow_separation}. The key differences
from the stochastic case are in three places:
\begin{itemize}
\item \emph{Encoder.} With $\Prob_X\!=\!\delta_{x_0}$, the encoder is
fully deterministic: $X_t\!=\!x_0'$ for $t\!\in\!\cB_m$ and
$X_t\!=\!x_0$ for $t\!\notin\!\cB_m$. The codeword is binary,
$c_m\!\in\!\set{x_0,x_0'}^n$.
\item \emph{Witness gap.} The dist-OW gap in step~(1) of the auth-tag
proof reduces, since $\Prob_X$ is a Dirac mass, to the I-OW gap
\eqref{eq:Iow_separation}, namely
\[\inf_{\Probc}\mathbb{V}\!\bigl(\textstyle\sum_s\Probc(s)W(\cdot\!\mid\!x_0,s),W(\cdot\!\mid\!x_0',\szero)\bigr)\!\geq\!\delta.\]
\item \emph{Step~(2) martingale.} On the flip set the encoder
transmits $x_0$ deterministically; the conditional Bernoulli mean is
$p_t\!=\!\sum_s\Probc_{S_t\mid\cF_{t-1}}(s\mid\cF_{t-1})W(A\!\mid\!x_0,s)$,
$\delta$-far from $W(A\!\mid\!x_0',\szero)$ uniformly over
$\Probc$. Azuma--Hoeffding on the same bounded-difference martingale
gives the same flip-set concentration.
\end{itemize}
The codebook (overlapping-sets lemma) and the overlap property
(step~(3)) are unchanged -- they are statements about set geometry
and do not depend on the input distribution. Combining gives
$C^{\bullet}_{\mathrm{auth,det}}\!\geq\!R\!>\!0$ for both
$\bullet\!\in\!\set{\mathrm{fb},\mathrm{no\text{-}fb}}$. The full
deterministic proof is in \extref{app:detach_full}.

\begin{remark}
\label{rem:detach_vs_stoch}
Let $C^{*}\!\triangleq\!\max_{\Prob_X\in\mathcal{P}(\set{x_0,x_0'})}
I(X;Y\!\mid\!S\!=\!\szero)$ be the Shannon capacity of the
no-adversary sub-channel $W(y\!\mid\!x,\szero)$ under inputs
restricted to $\set{x_0,x_0'}$. The proof yields a positive rate
strictly below $C^{*}$. The gap to $C^{*}$ (and a fortiori to
$C_{\szero}$) is an open problem -- see Remark~\ref{rem:detopen}.
\end{remark}
\else
The achievability direction of Theorem~\ref{thm:posdet} follows the
same template as the auth-code construction of
Section~\ref{sec:positivity}, specialized to the Dirac input
distribution $\Prob_X\!=\!\delta_{x_0}$. The dist-OW gap is then replaced
by the I-OW gap~\eqref{eq:Iow_separation}, and the flip-set
martingale concentration carries over verbatim. The codebook and the
overlap property are independent of the input distribution. Full
details are in~\extref{app:detach_full}.
\fi

\section{Capacity for stochastic codes (Theorem~\ref{thm:caprate})}
\label{sec:capacity}

The auth-code of Section~\ref{sec:positivity} only guarantees an
exponential number of messages, short of $C_{\szero}$. We reach the
no-adversary capacity in two steps: (i) build an
\emph{authentication tag} with doubly-exponentially many candidates by
composing the auth-code with an Ahlswede--Dueck identification code
for the noiseless binary channel;
(ii) concatenate the tag with a capacity-achieving channel code on
$W_{Y\mid X,S=\szero}$, with the channel code carrying $M$ in $n_1$
symbols and the tag carrying a verification statistic in
$n_2\!=\!O(\log n)$ symbols.

\subsection{Authentication tags}
\label{ssec:authtag}
An $(N,n)$-\emph{authentication tag}~\cite[Def.~9]{BakshiK:23ISIT}
consists of a stochastic encoder $\Prob_{X^n\mid M}$ and a family of
deterministic verifiers
$\set{\authdec_{\hat m}\colon\cY^n\!\to\!\set{\accept,\reject}}_{\hat m}$
indexed by a candidate message $\hat m$ already available to the
receiver, with low false-alarm and missed-detection error (formal
definition in \extref{app:authtag}). The tag's \emph{rate} is the
doubly-exponential growth $\frac{1}{n}\log\log N_n$.

\begin{lemma}[Existence of authentication tags]
\label{lem:tag}
If $\chn$ is not distribution-overwritable, then there exists
$R\!>\!0$ such that, for every $(\lambda_1,\lambda_2)\!\in\!(0,1)^2$
and all $n$ sufficiently large, there is an $(N_n,n)$-tag with
errors $(\lambda_1,\lambda_2)$ and
$\liminf_{n\to\infty}\frac{1}{n}\log\log N_n\!\geq\!R$.
\end{lemma}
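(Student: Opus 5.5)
We will obtain the tag by composing the positive-rate authentication code of Section~\ref{sec:positivity} with an Ahlswede--Dueck identification code for the noiseless binary channel. The auth-code serves as an outer "noiseless-bit" layer. Because $\chn$ is not distribution-overwritable, Section~\ref{sec:positivity} supplies, for any $(\lambda_1',\lambda_2')$ and all large $n$, an $(N',n)$ auth-code with $N'\geq 2^{R'n}$ for some fixed $R'>0$. Its no-adversary error is at most $\lambda_1'$, and for every feedback adversary $\Probc$ its probability of outputting a wrong message (neither $M$ nor $\reject$) is at most $\lambda_2'$. The Ahlswede--Dueck construction gives, for any $k$ and any $(\mu_1,\mu_2)$, a family of subsets $\set{\cD_{\hat m}\subseteq[K]}_{\hat m\in[N_n]}$ with $K=2^k$ and $N_n\geq 2^{2^{k(1-o(1))}}$. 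Each message $\hat m$ also has a distribution $\Prob_{U\mid \hat m}$ on $[K]$, chosen uniform on a set $\cU_{\hat m}$ with $\cD_{\hat m}=\cU_{\hat m}$, and the family satisfies $\Prob_{U\mid \hat m}(\cD_{\hat m})=1$ and $\Prob_{U\mid m}(\cD_{\hat m})\leq\mu_2$ for $m\neq\hat m$. We take $k=\lfloor R'n\rfloor$ and identify $[K]$ with the auth-code message set. The tag encoder, given $m$, draws $U\sim\Prob_{U\mid m}$ and sends the auth-codeword of $U$. The verifier $\authdec_{\hat m}$ runs the auth-code decoder to get $\hat U$ and outputs $\accept$ iff $\hat U\in\cD_{\hat m}$; in particular, $\hat U=\reject$ leads to $\reject$. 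Then $\log\log N_n\geq k(1-o(1))$, so $\liminf\frac1n\log\log N_n\geq R'$.

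False alarm follows directly. Under $\Probc_0$, $\hat U=U\in\cD_m$ except with probability $\lambda_1'$, so the false-alarm probability is at most $\lambda_1'\leq\lambda_1$.

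Missed detection is the step that needs care. A missed detection occurs when $\authdec_{\hat m}$ accepts although the true message is $m\neq\hat m$, and the adversary may be adaptive. We decompose the event $\set{\hat U\in\cD_{\hat m}}$ into $\set{\hat U=U,\,U\in\cD_{\hat m}}$ and $\set{\hat U\notin\set{U,\reject}}$. The second event has probability at most $\lambda_2'$ uniformly over all feedback $\Probc$, because that is exactly the auth-code guarantee. The auth-code guarantee is stated for a uniform message, whereas here $U$ is uniform only on $\cU_m$. We handle this by applying the auth-code guarantee in its maximal-error form. If only an average-error guarantee is available, we first expurgate the auth-code by a factor of $2$, which does not change the exponent $R'$. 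For the first event, the adversary does not know $U$ beyond what the channel reveals, but we bound it crudely by $\Pr(U\in\cD_{\hat m})\leq\mu_2$. This bound holds because $U$ is drawn by the encoder independently of the adversary's randomness, and the event $\set{\hat U=U}$ only shrinks it. The missed-detection probability is therefore at most $\mu_2+\lambda_2'\leq\lambda_2$.

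I expect the main obstacle to be the uniformity mismatch. The ID code's input law $\Prob_{U\mid m}$ is not uniform on $[K]$, while the maximal-error auth-code guarantee must hold uniformly over all feedback adversaries. The Section~\ref{sec:positivity} proof actually gives $e^{-\Omega(n)}$ per-message bounds through the union bound over candidates and Azuma--Hoeffding, with the supremum over $\Probc$ taken outside. I therefore plan to state the maximal-error version explicitly and check that the exponent is unaffected when we condition on a fixed transmitted $U=u$. If conditioning on $u$ causes trouble, the fallback is to use the $e^{-\Omega(n)}$ bound, which survives a union over $u\in\cU_m$ weighted by $1/|\cU_m|$.
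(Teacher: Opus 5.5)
Your proposal follows the paper's proof: you compose the single-exponential auth-code of Section~\ref{sec:positivity} with an Ahlswede--Dueck identification code for the noiseless binary channel, accept $\hat m$ iff the auth-decoded word passes the identification test for $\hat m$, and bound missed detection by the auth-code's adversary-uniform wrong-decoding probability plus the identification cross-error $\Prob_{U\mid m}(\cD_{\hat m})$. You are right to worry about the non-uniform input $\Prob_{U\mid m}$, and your main plan resolves it, because the per-message Azuma bound of Section~\ref{sec:positivity} already gives a maximal-error guarantee uniform over feedback adversaries, including $m$-dependent ones; drop the expurgation fallback, though, since the set of bad messages depends on $\Probc$ and expurgation cannot turn an average-error guarantee into the adversary-uniform one you need.
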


The proof has two parts. First, we take an authentication code of
Section~\ref{sec:positivity}, whose missed-detection bound is
uniform over output-feedback adversaries. Next, we compose
this auth-code with an Ahlswede--Dueck identification
code~\cite{AhlswedeD89} for the noiseless binary channel: each
auth-code codeword is reinterpreted as the output of a length-$\tilde
n$ identification code with $\lfloor 2^{2^{\tilde n\tilde R}}\rfloor$
candidate messages, and the verifier on a candidate $\hat m$ first
runs the auth-code's decoder and then the identification verifier
on the decoded codeword. The result is a tag with
$\frac{1}{n}\log\log N_n\!\geq\!\tilde R\hat R\!=\!R$.

\subsection{Concatenation with a channel code}
\label{ssec:concatenation}
\begin{figure}[t]
\centering
\resizebox{\columnwidth}{!}{%
\begin{tikzpicture}[
  >=stealth, font=\scriptsize,
  block/.style={draw, rounded corners=1.5pt, minimum height=18pt,
                minimum width=30pt, align=center, inner sep=2pt,
                font=\scriptsize},
  io/.style={font=\scriptsize}
]
% Source
\node[io] (M) at (0,0) {$M$};

% Encoders
\node[block, fill=blue!8] (Fch)  at (1.1, 0.55) {$\Prob^{\mathrm{ch}}$};
\node[block, fill=blue!8] (Ftag) at (1.1,-0.55) {$\Prob^{\mathrm{tag}}$};
\draw[->] (M) -- (Fch);
\draw[->] (M) -- (Ftag);

% Channel
\node[block, fill=red!8, minimum height=34pt, minimum width=22pt] (chan) at (3.2,0) {$\chn$};
\draw[->] (Fch.east)  -- node[above,font=\tiny]{$X^{n_1}$}        (chan.west |- Fch.east);
\draw[->] (Ftag.east) -- node[below,font=\tiny]{$X_{n_1+1}^{n_1+n_2}$} (chan.west |- Ftag.east);

% Adversary selects channel state from past observations Y^{t-1}.
\node[block, fill=red!4, draw=red!50!black, font=\tiny] (adv) at (3.2, 1.25) {Adversary};
\draw[->, red!60!black] (adv) -- node[right,font=\tiny,red!60!black]{$S_t$} (chan.north);
\draw[->, red!60!black, dashed] (4.2,1.25) -- node[right,font=\tiny,red!60!black]{$Y^{t-1}$} (adv.east);

% Decoders
\node[block, fill=green!8] (phich) at (5.3, 0.55) {$\phi^{\mathrm{ch}}$};
\node[block, fill=green!8] (Phitag) at (5.3,-0.55) {$D^{\mathrm{tag}}_{\hat m}$};
\draw[->] (chan.east |- phich) -- node[above,font=\tiny]{$Y^{n_1}$}        (phich.west);
\draw[->] (chan.east |- Phitag) -- node[below,font=\tiny]{$Y_{n_1+1}^{n_1+n_2}$} (Phitag.west);

% Output
\node[io] (out) at (7.6,0) {$\hat M\!\in\![N]\!\cup\!\set{\reject}$};
\draw[->] (phich) -- node[above,font=\tiny]{$\hat m$} (out);
\draw[->] (Phitag) -- node[below,font=\tiny,align=center]
  {\ $\accept$\\ \ or\\ \ $\reject$} (out);
\draw[->] (phich.south) -- node[right,font=\tiny]{$\hat m$} (Phitag.north);
\end{tikzpicture}}
\caption{Code construction for Theorem~\ref{thm:caprate}.}
\label{fig:achievability}
\end{figure}

%\parnoindent{Code construction}
The achievability scheme concatenates a capacity-achieving channel
code with the tag of Lemma~\ref{lem:tag} (Fig.~\ref{fig:achievability}). Fix $R\!<\!C_{\szero}$. Choose blocklength $n$ so that:
\begin{enumerate}[itemsep=0pt, label=(\emph{\roman*}), wide, labelindent=0pt]
\item there is an $(\lceil 2^{nR}\rceil,n_1)$ channel code
$(\Prob^{\mathrm{ch}}_{X^{n_1}\mid M},\phi^{\mathrm{ch}})$ over $W_{Y\mid X,S=\szero}$ with average error
$\epsilon_n\!=\!o(1)$;
\item there is an $(\lceil 2^{nR}\rceil,n_2)$ authentication tag
$(\Prob^{\mathrm{tag}}_{X^{n_2}\mid M},\set{\authdec^{\mathrm{tag}}_{\hat m}})$
of error pair $(\epsilon_n,\epsilon_n)$ for $\chn$, guaranteed by
Lemma~\ref{lem:tag} as long as
$n_2\!=\!\Omega(\log\log\lceil 2^{nR}\rceil)\!=\!O(\log n)$.
\end{enumerate}
The composite encoder uses
\begin{equation}
\Prob_{X^{n_1+n_2}\!\mid\! M}(x^{n_1+n_2}\!\!\mid\! m)
\!=\!\Prob^{\mathrm{ch}}_{X^{n_1}\!\mid M}\!(x^{n_1}\!\mid\! m)\Prob^{\mathrm{tag}}_{X^{n_2}\mid\! M}\!(x_{n_1+1}^{n_1+n_2}\!\!\mid\!m).
\end{equation}
The decoder first runs $\phi^{\mathrm{ch}}$ on $Y^{n_1}$ to obtain a
candidate $\hat m\!=\!\phi^{\mathrm{ch}}(Y^{n_1})$, then verifies it via
the tag:
\begin{equation}
\phi(Y^{n_1+n_2})=\begin{cases}\hat m & \text{if }\authdec^{\mathrm{tag}}_{\hat m}(Y_{n_1+1}^{n_1+n_2})\!=\!\accept,\\
\reject & \text{otherwise.}\end{cases}
\label{eq:composite}
\end{equation}

\parnoindent{Error analysis} Under no adversary, the channel-code stage outputs $\hat m\!=\!M$ with
probability $\geq\!1\!-\!\epsilon_n$, and the verifier accepts the correct
tag with probability $\geq\!1\!-\!\epsilon_n$. So
$\Pr_{\PNA}(\phi\!\neq\!M)\!\leq\!2\epsilon_n$.

Under any adversary strategy, the channel-code stage outputs
\emph{some} candidate $\hat m\!\in\![N]$. If $\hat m\!=\!M$, the tag
verifier accepts and authentication succeeds. If
$\authdec^{\mathrm{tag}}_{\hat m}(Y_{n_1+1}^{n_1+n_2})\!=\!\reject$, the
composite decoder outputs $\reject$ and authentication succeeds.
Otherwise $\hat m\!\neq\!M$ \emph{and} the verifier accepts -- but this
is exactly the missed-detection event of the tag, whose probability is
at most $\epsilon_n$ for every adversary strategy. Combining,
\begin{equation}
\Pr_{\PADV}(\phi(Y^{n_1+n_2})\notin\set{M,\reject})\leq \epsilon_n.
\end{equation}
Hence the composite code achieves authentication error $\leq\!3\epsilon_n\!=\!o(1)$
at rate $\tfrac{nR}{n+O(\log n)}\!\to\!R$ as $n\!\to\!\infty$.

Since $R\!<\!C_{\szero}$ was arbitrary,
$C^{\mathrm{fb}}_{\mathrm{auth,stoch}}\!\geq\!C_{\szero}$, and combined
with the trivial reverse inequality this proves Theorem~\ref{thm:caprate}.

\begin{remark}During the wait phase the adversary may decode $M$ from
$Y^{n_1}$ and become effectively $M$-omniscient in the tag phase. The
tag's missed-detection bound nevertheless holds uniformly: the
encoder's fresh i.i.d.\ $\Prob_X$ on the flip set
$\cB_{\hat m}\!\setminus\!\cB_m$ hides $X_t$ at each position, and the
non-distribution-overwritability gap~\eqref{eq:nonow} controls $\pi_t$
uniformly over $\Probc_{S_t\mid\cF_{t-1}}$. Details are
in~\extref{app:authtag:composition}.\end{remark}

\section{Conclusion}
\label{sec:conclusion}

We examined the authentication capacity of an AVC against a
causal adversary with output feedback. For stochastic codes,
positivity is equivalent to $\chn$ not being
distribution-overwritable, and whenever the capacity is positive it
equals the no-adversary Shannon capacity $C_{\szero}$. For
deterministic codes, positivity is equivalent to $\chn$ not being
I-overwritable; characterizing the deterministic-code capacity remains open, paralleling the analogous gap in the
omniscient-adversary setting. The converses follow from a
wait-and-overwrite attack based on by the posterior guessing
probability, and the achievability uses an overlapping-set code
analyzed via a martingale that absorbs the adversary's
adaptivity. Reaching
$C_{\szero}$ requires composing this authentication code with an
Ahlswede--Dueck identification tag, to obtain an authentication tag and,  concatenating the resulting authentication tag with a
channel code. Natural directions for future work
include extending the framework to continuous alphabets, to feedback
adversaries with delayed or noisy observations, and to multi-user
settings.

\bibliographystyle{IEEEtran}
\bibliography{refs}

% Appendices: only included in the arXiv extended version.
\ifarxiv
\newpage
\appendices

\begin{table}[h]
\centering
\caption{Proof roadmap. Building blocks for the converse and
achievability are listed first; the main theorems combine the two
sides.}
\label{tab:roadmap}
\renewcommand{\arraystretch}{1.25}
\setlength{\tabcolsep}{4pt}
\begin{tabular}{@{}p{0.30\linewidth} p{0.30\linewidth} p{0.32\linewidth}@{}}
\toprule
Result & Proved in & Uses \\
\midrule
\multicolumn{3}{@{}l}{\emph{Converse-side building blocks}} \\
\midrule
Lemma~\ref{lem:guessing} (sub-multiplicativity) &
App.~\ref{app:guessing} &
--- \\
Lemma~\ref{lem:stopping} (stopping time) &
App.~\ref{app:stopping} &
Lemma~\ref{lem:guessing} \\
Claim~\ref{claim:gamma} (per-step matching) &
App.~\ref{app:converse} &
dist-OW (Def.~\ref{def:distow}) / I-OW (Def.~\ref{def:Iow}) \\
Proposition~\ref{thm:converse} (converse halves) &
App.~\ref{app:converse} &
Lemma~\ref{lem:stopping}, Claim~\ref{claim:gamma} \\
\midrule
\multicolumn{3}{@{}l}{\emph{Achievability-side building blocks}} \\
\midrule
Auth-code construction (single-exp.) &
Section~\ref{sec:positivity}, App.~\ref{app:authtag:proof} &
non-distribution-OW gap~\eqref{eq:nonow}, flip-set martingale \\
Lemma~\ref{lem:tag} (auth-tag, doubly-exp.) &
App.~\ref{app:authtag:idcomp} &
auth-code construction, identification codes \\
Concatenation with channel code &
App.~\ref{app:authtag:composition} &
Lemma~\ref{lem:tag}, capacity-achieving channel code on $W_{Y\mid X,S=\szero}$ \\
\midrule
\multicolumn{3}{@{}l}{\emph{Main theorems (combining both sides)}} \\
\midrule
Theorem~\ref{thm:posstoch} (stochastic positivity) &
Achiev.: Section~\ref{sec:positivity};\newline
Conv.: App.~\ref{app:converse}, case~(i) &
auth-code construction, Proposition~\ref{thm:converse} \\
Theorem~\ref{thm:caprate} (stochastic capacity) &
Achiev.: Section~\ref{sec:capacity}, App.~\ref{app:authtag};\newline
Conv.: App.~\ref{app:converse} &
Lemma~\ref{lem:tag}, Proposition~\ref{thm:converse} \\
Theorem~\ref{thm:posdet} (deterministic positivity) &
Achiev.: Section~\ref{sec:detach}, App.~\ref{app:detach_full};\newline
Conv.: App.~\ref{app:converse}, case~(ii) &
Const.-comp.\ codebook + I-OW, Proposition~\ref{thm:converse} \\
\midrule
\multicolumn{3}{@{}l}{\emph{Illustrative examples (Remark~\ref{rem:nest})}} \\
\midrule
Bit-flip channel &
App.~\ref{app:ex:bitflip} &
I-overwritable but not distribution-overwritable \\
AZBSC$(\delta,\alpha)$ &
App.~\ref{app:azbsc} &
distribution-overwritable but not obliviously overwritable \\
\bottomrule
\end{tabular}
\end{table}

\section{Sub-multiplicative growth of the guessing probability}
\label{app:guessing}

\begin{lemma}[Guessing-probability sub-multiplicativity]
\label{lem:guessing}
Under the no-adversary law $\PNA$,
\begin{equation}
\Exp\!\bigl[\sqrt{\Probg_{M\mid Y^t\!,S^t}(Y^t\!,\szero^t)}\bigm\vert Y^{t-1}\!=\!y^{t-1}\bigr]
\leq \lvert\cY\rvert\sqrt{\Probg_{M\mid Y^{t-1}\!,S^{t-1}}(y^{t-1}\!,\szero^{t-1})}.
\end{equation}
\end{lemma}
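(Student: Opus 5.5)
Write $\pi(m)\triangleq\Pr(M\!=\!m\mid Y^{t-1}\!=\!y^{t-1},S^{t-1}\!=\!\szero^{t-1})$ for the current posterior. Let $q_m(y)$ be the no-adversary conditional probability that $Y_t\!=\!y$ given $M\!=\!m$ and $y^{t-1}$; this is obtained by averaging $W(y\mid x,\szero)$ over the encoder's conditional input law, which in turn is an average over $x^{t-1}$ with path likelihoods. The only structural fact needed is that $q_m(\cdot)$ is a probability distribution on $\cY$ for each $m$.

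Let $p(y)\triangleq\sum_m\pi(m)q_m(y)$. By Bayes' rule the updated posterior is $\pi(m)q_m(y)/p(y)$. Hence
\begin{equation}
\Probg_{M\mid Y^t,S^t}(y^{t-1}y,\szero^t)=\frac{\sum_m\pi(m)^2q_m(y)^2}{p(y)^2}.
\end{equation}
The conditional expectation in the lemma then becomes
\begin{equation}
\sum_{y}p(y)\frac{\sqrt{\sum_m\pi(m)^2q_m(y)^2}}{p(y)}=\sum_{y\in\cY}\Bigl(\sum_m\pi(m)^2q_m(y)^2\Bigr)^{1/2}.
\end{equation}
Outputs with $p(y)=0$ contribute nothing and can be dropped. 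This cancellation of $p(y)$ is the key step, because it removes the normalization entirely.

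Since $q_m(y)\le1$, each term satisfies $\sum_m\pi(m)^2q_m(y)^2\le\sum_m\pi(m)^2=\Probg_{M\mid Y^{t-1},S^{t-1}}(y^{t-1},\szero^{t-1})$. Summing over the $|\cY|$ outputs gives the bound $|\cY|\sqrt{\Probg_{M\mid Y^{t-1},S^{t-1}}}$. A sharper $\sqrt{|\cY|}$ would follow from Cauchy--Schwarz together with $\sum_y q_m(y)^2\le1$, but the weaker form is all the lemma claims.

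The only real obstacle is justifying that $q_m$ is well defined under encoder feedback and stochastic encoding. Conditioning on $S^{t-1}=\szero^{t-1}$ together with the encoder's dependence on $(M,X^{t-1},Y^{t-1})$ gives $q_m(y)=\sum_{x^{t}}\Pr(x^{t-1}\mid m,y^{t-1})\,\Prob(x_t\mid m,x^{t-1},y^{t-1})\,W(y\mid x_t,\szero)$. This is a valid distribution in $y$, and Bayes' rule applies to it as used above.
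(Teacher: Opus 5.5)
Your proof is correct and is essentially the paper's argument. The paper rewrites $p(y_t\mid y^{t-1})\,\Prob_{M\mid Y^t}(m\mid y^t)$ as the joint $\Prob_{M,Y_t\mid Y^{t-1}}(m,y_t\mid y^{t-1})=\pi(m)q_m(y_t)$, which is your cancellation of $p(y)$, and it bounds that joint by the marginal $\pi(m)$, which is your $q_m(y)\le 1$. Your side remark is also correct: Cauchy--Schwarz together with $\sum_y q_m(y)^2\le 1$ improves the constant to $\sqrt{\lvert\cY\rvert}$.
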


\noindent\emph{Proof.}\quad
Write $g(y^t)\!\triangleq\!\Probg_{M\mid Y^t\!,S^t}(y^t,\szero^t)$ and
$p(y_t\!\mid\!y^{t-1})\!\triangleq\!\Prob_{Y_t\mid Y^{t-1}\!,S^t}(y_t\!\mid\!y^{t-1},\szero^t)$.
For every $(m,y_t)$,
\begin{equation}
\Prob_{M,Y_t\mid Y^{t-1}}(m,y_t\!\mid\!y^{t-1})\leq \Prob_{M\mid Y^{t-1}}(m\!\mid\!y^{t-1}).
\end{equation}
Squaring and summing over $m$ gives
\begin{equation}
\sum_m\Prob_{M,Y_t\mid Y^{t-1}}(m,y_t\!\mid\!y^{t-1})^2\leq g(y^{t-1}).
\label{eq:joint_le_marg}
\end{equation}
Therefore
\begin{align}
&\Exp[\sqrt{g(Y^{t})}\!\mid\!y^{t-1}]\eqbr=\!\sum_{y_t}\!p(y_t\!\mid\!y^{t-1})\sqrt{\textstyle\sum_m\Prob_{M\mid Y^t}(m\!\mid\!y^t)^2}\\ \nonumber & =\!\sum_{y_t}\!\sqrt{\textstyle\sum_m\!\bigl(p(y_t\!\mid\!y^{t-1})\Prob_{M\mid Y^t}(m\!\mid\!y^t)\bigr)^2}\eqbr=\!\sum_{y_t}\!\sqrt{\textstyle\sum_m\Prob_{M,Y_t\mid Y^{t-1}}(m,y_t\!\mid\!y^{t-1})^2}\eqbr\leq\!\sum_{y_t}\!\sqrt{g(y^{t-1})}=\lvert\cY\rvert\sqrt{g(y^{t-1})}.
\end{align}
\hfill$\square$

\section{Proof of Lemma~\ref{lem:stopping}}
\label{app:stopping}

Throughout this appendix all probabilities and expectations are taken
under the no-adversary law $\PNA$, i.e., the joint law of $(M,X^n,Y^n)$
when the adversary plays $S^n\!=\!\szero^n$. Write
$g(y^t)\!\triangleq\!\Probg_{M\mid Y^t\!,S^t}(y^t,\szero^t)$ and recall
$\Tg=\inf\set{t\!\in\![n]:g(Y^t)\!>\!\gamma}$.

\emph{Proof of \eqref{eq:stopping1}.}\quad
Let $p_e^{(n)}\!\triangleq\!\Pr_{\PNA}(\hat M\!\neq\!M)$, and write
$p^*(y^n)\!\triangleq\!1\!-\!\max_m\Prob_{M\mid Y^n}(m\!\mid\!y^n)$ for
the conditional MAP error given $Y^n\!=\!y^n$. Since the MAP rule is
optimal,
$\Exp[p^*(Y^n)]\!\leq\!p_e^{(n)}$.
The guessing probability dominates the squared MAP success
probability:
\begin{equation}
g(y^n)\!\geq\!\bigl(\max_m\Prob_{M\mid Y^n}(m\!\mid\!y^n)\bigr)^2\!=\!(1-p^*(y^n))^2,
\end{equation}
so $g(Y^n)\!\leq\!\gamma$ forces
$p^*(Y^n)\!\geq\!1\!-\!\sqrt\gamma$. By Markov,
\begin{equation}
\Pr(\Tg\!>\!n)\!\leq\!\Pr(g(Y^n)\!\leq\!\gamma)\!\leq\!\frac{p_e^{(n)}}{1-\sqrt\gamma}\!\xrightarrow[n\to\infty]{}\!0.
\label{eq:gtoone}
\end{equation}

\emph{Proof of \eqref{eq:stopping2}.}\quad
Under the event $\set{\Tg\!=\!t}$, the definition of $\Tg$ as the
\emph{first} crossing time gives $g(Y^{t-1})\!\leq\!\gamma$. Combining
with Lemma~\ref{lem:guessing},
\begin{equation}
\Exp\!\bigl[\sqrt{g(Y^t)}\bigm\vert\Tg\!=\!t\bigr]\!\leq\!\lvert\cY\rvert\sqrt{\gamma}.
\end{equation}
Markov's inequality with $\gamma\!<\!1/(32\lvert\cY\rvert^2)$ yields
\begin{align}
\Pr\!\bigl(g(Y^t)\!>\!\tfrac12\bigm\vert\Tg\!=\!t\bigr)
&\leq\sqrt 2\cdot\Exp\!\bigl[\sqrt{g(Y^t)}\bigm\vert\Tg\!=\!t\bigr]\nonumber\\
&\leq\sqrt 2\,\lvert\cY\rvert\sqrt\gamma\!<\!\tfrac14.
\end{align}
Averaging over $t\!\in\![n]$ weighted by $\Pr(\Tg\!=\!t)$ gives
$\Pr(g(Y^{\Tg})\!>\!1/2\!\mid\!\Tg\!\leq\!n)\!<\!1/4$, i.e.,
\eqref{eq:stopping2}.\hfill$\square$

\section{Detailed proof of the converse}
\label{app:converse}

We give a single detailed proof of the converse
Proposition~\ref{thm:converse}, covering both the
stochastic/distribution-overwritable and the
deterministic/I-overwritable cases. We first establish the
per-step matching identity (Claim~\ref{claim:gamma}); then write the
attack joint distribution explicitly; decompose the resulting
distribution into a ``matched'' part ($M\!=\!\hat M$) and a
``mismatched'' part ($M\!\neq\!\hat M$); bound the rejection
probability and the correct-decoding probability separately; and
finally combine via the high-probability event~$\mathcal{E}$.

Throughout, $(F,\phi)$ is a feedback code of blocklength $n$ with
auth-error $\epsilon_n$,
$\cD_\perp\!\triangleq\!\phi^{-1}(\reject)$, and the wait-and-overwrite
attack of Section~\ref{sec:converse} stops at $\Tg$ and overwrites with
$\tilde M$. We use $\Proba$ for the joint distribution under the
attack and $\Prob_{\PNA}$ for the no-adversary joint. The argument is
written for the stochastic case; the deterministic case is the
specialization in which $\Proba^{(\tau)}\!=\!\mathbf{1}_{c_{\hat m,\tau}}$
is a point mass and dist-OW reduces to I-OW (cf.\ end of this
appendix).

\parnoindent{Per-step matching}

\begin{claim}\label{claim:gamma}
Suppose $\chn$ is distribution-overwritable. Fix
$\tau\!\in\![\Tg+1\!:\!n]$ and a history
$h^{\tau-1}\!\triangleq\!(\hat m,\tilde m,y^{\tau-1},s^{\tau-1})$. Let
$\Proba^{(\tau)}$ be the encoder's marginal at step $\tau$ under the
attack with $M\!=\!\hat M\!=\!\hat m$, and let $\Prob^{(\tau)}$ be the
encoder's marginal at step $\tau$ under no-adversary with
$M\!=\!\tilde m$. Then there exists
$\Probc^{(\tau)}\!\in\!\mathcal{P}(\cS)$ with
\begin{align}
&\sum_{x_\tau\!,s_\tau}\Proba^{(\tau)}\!(x_\tau)\Probc^{(\tau)}\!(s_\tau)\chn(y_\tau\!\mid\!x_\tau,s_\tau)\eqbr=\sum_{x'_\tau}\Prob^{(\tau)}\!(x'_\tau)\chn(y_\tau\!\mid\!x'_\tau,\szero)
\label{eq:claimstep_app}
\end{align}
for every $y_\tau\!\in\!\cY$.
\end{claim}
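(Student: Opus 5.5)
The plan is to reduce the claim to Definition~\ref{def:distow} applied pointwise in $x'_\tau$, and then use the fact that the left-hand side of~\eqref{eq:claimstep_app} is linear (affine) in $\Probc^{(\tau)}$.

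First, I make precise that both $\Proba^{(\tau)}$ and $\Prob^{(\tau)}$ are fixed elements of $\mathcal{P}(\cX)$ once the history $h^{\tau-1}=(\hat m,\tilde m,y^{\tau-1},s^{\tau-1})$ is fixed.
\begin{itemize}
\item $\Proba^{(\tau)}(x)$ is the conditional probability that the encoder emits $x$ at time $\tau$, given $M=\hat m$ and the realized $(y^{\tau-1},s^{\tau-1})$. It is obtained by weighting each input path $x^{\tau-1}$ by the path likelihood $\prod_{i<\tau}\Prob_{X_i\mid M,X^{i-1},Y^{i-1}}(x_i\mid \hat m,x^{i-1},y^{i-1})\,W(y_i\mid x_i,s_i)$, normalizing, and then averaging $\Prob_{X_\tau\mid M,X^{\tau-1},Y^{\tau-1}}(\cdot\mid \hat m,x^{\tau-1},y^{\tau-1})$ against this weight.
\item $\Prob^{(\tau)}$ is defined the same way with $\tilde m$ in place of $\hat m$ and $\szero^{\tau-1}$ in place of $s^{\tau-1}$.
\item If a normalizing constant vanishes (a history of probability zero), I set the corresponding distribution to an arbitrary fixed element of $\mathcal{P}(\cX)$. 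This does not affect any probability computed later.
\end{itemize}
Both objects depend only on the encoder and on $h^{\tau-1}$, so the adversary can compute them. Hence $\Probc^{(\tau)}$ can legitimately be chosen as a function of $h^{\tau-1}$.

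Next, I apply Definition~\ref{def:distow} with $\Prob=\Proba^{(\tau)}$ and with each symbol $x'\in\cX$ in turn. This gives distributions $\Probc_{x'}\in\mathcal{P}(\cS)$ satisfying
\[
\sum_{x,s}\Proba^{(\tau)}(x)\Probc_{x'}(s)W(y\mid x,s)=W(y\mid x',\szero)\quad\text{for all } y\in\cY.
\]
I then define the mixture $\Probc^{(\tau)}\triangleq\sum_{x'}\Prob^{(\tau)}(x')\Probc_{x'}$, which lies in $\mathcal{P}(\cS)$ because $\mathcal{P}(\cS)$ is convex.

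The left-hand side of~\eqref{eq:claimstep_app} is linear in $\Probc^{(\tau)}$. Substituting the mixture and exchanging the finite sums gives
\[
\sum_{x'}\Prob^{(\tau)}(x')\sum_{x,s}\Proba^{(\tau)}(x)\Probc_{x'}(s)W(y\mid x,s)=\sum_{x'}\Prob^{(\tau)}(x')W(y\mid x',\szero),
\]
which is exactly~\eqref{eq:claimstep_app}.

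For the deterministic case, the same argument applies with $\Proba^{(\tau)}=\mathbf{1}_{c_{\hat m,\tau}}$ and $\Prob^{(\tau)}=\mathbf{1}_{c_{\tilde m,\tau}}$. Definition~\ref{def:Iow} then suffices, and no mixing is needed.

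The computation itself is routine. The step I expect to need care is the first one: checking that the conditional input laws are functions of the adversary's view alone, and not of $M$ or $X^{\tau-1}$. This matters because $\Probc^{(\tau)}$ must be a valid strategy $\Probc_{S_\tau\mid S^{\tau-1},Y^{\tau-1}}$, once the adversary's internal randomness $(\hat m,\tilde m)$ and the $\Tg$-measurable wait-phase history are included in its view. I will handle this by writing $\Proba^{(\tau)}$ explicitly through the path-likelihood formula above. That formula involves only the encoder kernels, the channel $W$, and $h^{\tau-1}$.
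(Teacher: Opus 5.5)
Your proposal is correct and follows the paper's proof essentially verbatim. Like the paper, you compute $\Proba^{(\tau)}$ from the encoder kernels and the realized history via the path-likelihood formula, invoke Definition~\ref{def:distow} with $\Prob=\Proba^{(\tau)}$ for each target symbol $x'$, and average the resulting witnesses against $\Prob^{(\tau)}(x')$ using linearity in $\Probc^{(\tau)}$; your handling of zero-probability histories and of $\Probc^{(\tau)}$ depending only on the adversary's view (including $(\hat m,\tilde m)$) just makes explicit what the paper leaves implicit.
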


\noindent\emph{Proof.}\quad
$\Proba^{(\tau)}$ is computable from the encoder's conditional law and
the realized $(\hat m,y^{\tau-1},s^{\tau-1})$ via the path likelihood
\begin{equation}
\encno\!(x_\tau\!\mid\!m,h^{\tau-1})
=\frac{\Lambda_\tau\!(m,x_\tau,h^{\tau-1})}{\sum_{x'_\tau}\Lambda_\tau\!(m,x'_\tau,h^{\tau-1})},
\label{eq:Pinherit}
\end{equation}
where
\begin{align}
&\Lambda_\tau(m,x_\tau,h^{\tau-1})\eqbr=\sum_{x^{\tau-1}}\biggl(\prod_{i=1}^{\tau-1}\Prob_{X_i\mid \condmsg}(x_i\!\mid\!m,x^{i-1}\!,y^{i-1})\eqbr\times\chn(y_i\!\mid\!x_i,s_i)\biggr)\eqbr\times\Prob_{X_\tau\mid \condmsgseq}(x_\tau\!\mid\!m,x^{\tau-1}\!,y^{\tau-1}).
\end{align}
Averaging the per-$x'$ witnesses of Definition~\ref{def:distow}
against $\Prob^{(\tau)}(x')$ with $\Prob\!=\!\Proba^{(\tau)}$ yields a
$\Probc^{(\tau)}$ satisfying~\eqref{eq:claimstep_app}.\hfill$\square$

\paragraph*{Telescoping}
For $i\!\in\![\Tg+1\!:\!n]$, define the hybrid trajectory
\begin{align}
&\Gamma_{i}(y_{\Tg+1}^n\!\mid\!\hat m,\tilde m,y^{\Tg})\triangleq A_i\cdot B_i,
\label{eq:Gamma_app}
\end{align}
where
\begin{align}
A_i&\triangleq\!\!\sum_{s_{\Tg+1}^{i}}\prod_{\tau=\Tg+1}^{i}\!\Proba_{Y_\tau\!,S_\tau\mid \condfull}\!(y_\tau,s_\tau\!\mid\!\dots),\\
B_i&\triangleq\!\!\prod_{\tau=i+1}^{n}\!\sum_{x_\tau}\!\encno(x_\tau\!\mid\!\tilde m,y^{\tau-1}\!,\szero^{\tau-1})\eqbr\times\chn(y_\tau\!\mid\!x_\tau,\szero).
\end{align}
$A_i$ records the actual attack up through step $i$, and $B_i$ the
no-adversary distribution thereafter. Substituting~\eqref{eq:claimstep_app}
into the inner sum over $s_i$ inside $A_i$ collapses the $\tau\!=\!i$
factor of $A_i$ onto the $\tau\!=\!i$ factor of $B_{i-1}$, so
$\Gamma_i\!=\!\Gamma_{i-1}$. Iterating from $i\!=\!n$ down to
$i\!=\!\Tg+2$ yields the telescoping identity
\begin{equation}
\Gamma_n=\Gamma_{\Tg+1}.
\label{eq:gamma_telescope_app}
\end{equation}

\parnoindent{Explicit attack joint distribution}
\label{app:long:attack}

For $t\!\in\![n]$ and
$y^n\!\in\!\cG_t\!=\!\set{\Tg(y^n)\!=\!t}$, the wait phase forces
$S^t\!=\!\szero^t$ and the overwrite phase samples $\hat M,\tilde M$
i.i.d.\ from $\posnoadv(\cdot\!\mid\!y^t,\szero^t)$ and uses the
strategies~$\set{\Probc^{(\tau)}}_{\tau>t}$. Hence
\begin{align}
&\Proba(M\!=\!m,\hat M\!=\!\hat m,\tilde M\!=\!\tilde m,Y^n\!=\!y^n,S^n\!=\!(\szero^t,s_{t+1}^n))\\
&\quad=\Prob_M(m)\,\Prob^{\PNA}_{Y^t\mid M,S^t}(y^t\!\mid\!m,\szero^t)\,\posnoadv(\hat m\!\mid\!y^t,\szero^t)\\
&\qquad\cdot\posnoadv(\tilde m\!\mid\!y^t,\szero^t)\\
&\qquad\cdot\Proba_{Y_{t+1}^n,S_{t+1}^n\mid M,\hat M,\tilde M,Y^t,S^t}(y_{t+1}^n,s_{t+1}^n\!\mid\!m,\hat m,\tilde m,y^t,\szero^t).
\label{eq:long:joint}
\end{align}

\parnoindent{Term-by-term decomposition}
\label{app:long:decomp}

Marginalize~\eqref{eq:long:joint} over $(\hat m,m,s_{t+1}^n)$ and
split on $m\!=\!\hat m$ vs $m\!\neq\!\hat m$:
\begin{align}
\Proba(Y^n\!=\!y^n)
&=\Prob^{\PNA}_{Y^t\mid S^t}(y^t\!\mid\!\szero^t)\,\bigl[A_{\mathrm{term}}(y^n)\!+\!B_{\mathrm{term}}(y^n)\bigr],
\label{eq:long:decomp}
\end{align}
where
\begin{align}
A_{\mathrm{term}}(y^n)
&\triangleq\sum_{\hat m,\tilde m}\!\bigl(\posnoadv(\hat m\!\mid\!y^t,\szero^t)\bigr)^{\!2}\eqbr\cdot\posnoadv(\tilde m\!\mid\!y^t,\szero^t)\,\Gamma_n,
\label{eq:long:Aterm}\\
B_{\mathrm{term}}(y^n)
&\triangleq\!\!\!\sum_{\substack{m,\tilde m\\\hat m\neq m}}\!\!\!\posnoadv(m\!\mid\!\cdot)\,\posnoadv(\hat m\!\mid\!\cdot)\eqbr\cdot\posnoadv(\tilde m\!\mid\!\cdot)\,\Gamma_n.
\label{eq:long:Bterm}
\end{align}
By the telescoping identity~\eqref{eq:gamma_telescope_app},
$\Gamma_n\!=\!\Gamma_{\Tg+1}$, which is independent of $\hat m$ and
equals the no-adversary likelihood of $Y_{t+1}^n$ given
$M\!=\!\tilde m$ and $Y^t\!=\!y^t$. Substituting into
$A_{\mathrm{term}}$ and using
$\sum_{\hat m}(\posnoadv(\hat m\!\mid\!\cdot))^{2}\!=\!\PgM(y^t,\szero^t)$,
\begin{align}
A_{\mathrm{term}}(y^n)
&=\PgM(y^t,\szero^t)\eqbr\times\Prob^{\PNA}_{Y_{t+1}^n\mid Y^t,S^n}(y_{t+1}^n\!\mid\!y^t,\szero^n).
\label{eq:long:Aterm_done}
\end{align}
Multiplying by $\Prob^{\PNA}_{Y^t\mid S^t}(y^t\!\mid\!\szero^t)$ and
using $\PgM\!\leq\!1$,
\begin{equation}
\Prob^{\PNA}_{Y^t\mid S^t}(y^t\!\mid\!\szero^t)\,A_{\mathrm{term}}(y^n)\!\leq\!\Pr_{\PNA}(Y^n\!=\!y^n).
\label{eq:long:Aterm_bd}
\end{equation}

For $B_{\mathrm{term}}$, marginalize over $\tilde m$ first
($\sum_{\tilde m,y_{t+1}^n}\posnoadv\,\Gamma_n\!=\!1$); then
$\sum_{\hat m\neq m}\posnoadv(\hat m\!\mid\!\cdot)\!=\!1\!-\!\posnoadv(m\!\mid\!\cdot)$,
and $\sum_m\posnoadv(m\!\mid\!\cdot)(1\!-\!\posnoadv(m\!\mid\!\cdot))\!=\!1\!-\!\PgM(y^t,\szero^t)$.
Hence
\begin{align}
\!\!\sum_{y_{t+1}^n}\!\!B_{\mathrm{term}}(y^n)\!=\!1\!-\!\PgM(y^t,\szero^t)\!\leq\!1\!-\!\gamma\quad\text{for }y^t\!\in\!\cF_t,
\label{eq:long:Bterm_bd}
\end{align}
by the stopping condition $\PgM\!>\!\gamma$ on $\cF_t$.

\parnoindent{Total bound on $\Proba(\cD_\perp)$}
\label{app:long:rejection}

Combining~\eqref{eq:long:Aterm_bd} and~\eqref{eq:long:Bterm_bd} and
summing over $t$,
\begin{align}
\Proba(\cD_\perp)
&=\sum_{t=1}^n\sum_{y^n\in\cG_t\cap\cD_\perp}\Proba(Y^n\!=\!y^n)\nonumber\\
&\leq\Pr_{\PNA}(\cD_\perp)+(1\!-\!\gamma)\sum_t\Pr_{\PNA}(\Tg\!=\!t)\eqbr+\Pr_{\PNA}(\Tg\!>\!n)\nonumber\\
&\leq\epsilon_n+(1\!-\!\gamma)+\tfrac{\epsilon_n}{1\!-\!\sqrt\gamma},
\label{eq:long:rej}
\end{align}
where the last step uses
Lemma~\ref{lem:stopping}\eqref{eq:stopping1} for the residual mass on
$\set{\Tg\!>\!n}$ and $\Pr_{\PNA}(\cD_\perp)\!\leq\!\epsilon_n$.

\parnoindent{Total bound on $\Proba(\hat M\!=\!M)$}
\label{app:long:hatM}

Repeat the decomposition above but now sum $y^n$ over the event
$\set{\phi(y^n)\!=\!m}$ rather than $\cD_\perp$. After
using the same telescoping identity (i.e.,~\eqref{eq:long:Aterm_done}
absorbed into a no-adversary likelihood for message $\tilde m$), and
splitting on $m\!=\!\tilde m$ vs $m\!\neq\!\tilde m$,
\begin{align}
\Proba(\hat M\!=\!M)
&\leq\sum_{m,t}\Prob_M(m)\,\posnoadv(m\!\mid\!y^t,\szero^t)+\epsilon_n\nonumber\\
&\leq \Exp_{\PNA}\!\bigl[\PgM(Y^{\Tg},\szero^{\Tg})\bigr]+\epsilon_n.
\label{eq:long:hatM_bd}
\end{align}
By Lemma~\ref{lem:stopping}\eqref{eq:stopping2}, $\PgM\!\leq\!1/2$
with conditional probability $\geq\!3/4$ on $\set{\Tg\!\leq\!n}$ and
$\PgM\!\leq\!1$ otherwise; together with~\eqref{eq:stopping1},
\begin{equation}
\Exp_{\PNA}[\PgM(Y^{\Tg},\szero^{\Tg})]\!\leq\!\tfrac58+\tfrac{\epsilon_n}{1-\sqrt\gamma}.
\label{eq:long:hatM_exp}
\end{equation}

\parnoindent{Combining}
\label{app:long:combining}

The events $\set{\hat M\!=\!\reject}$ and $\set{\hat M\!=\!M}$ are
disjoint, so
\begin{align}
&\Proba(\hat M\!\notin\!\set{M,\reject})\eqbr=1-\Proba(\hat M\!=\!\reject)-\Proba(\hat M\!=\!M)\eqbr\geq\gamma\!-\!\tfrac58\!-\!2\epsilon_n\!-\!\tfrac{2\epsilon_n}{1\!-\!\sqrt\gamma}.
\label{eq:long:combined_loose}
\end{align}
Bound~\eqref{eq:long:combined_loose} is loose because $\gamma$ is
small. The tighter bound used in
Proposition~\ref{thm:converse} comes from
conditioning on the high-probability event
$\mathcal{E}\!=\!\set{\Tg\!\leq\!n,\PgM\!\leq\!1/2}$. Since
$M,\hat M,\tilde M$ are conditionally i.i.d.\ from the posterior given
$Y^{\Tg}$, and since $x\!\mapsto\!x(1-x)$ is increasing on $[0,1/2]$,
\begin{align}
&\Proba\!\bigl(\hat M\!=\!M,\,\tilde M\!\neq\!M\,\big\vert\,\mathcal{E}\bigr)\eqbr=\PgM(1-\PgM)\geq\gamma(1-\gamma).
\end{align}
On $\mathcal{E}\cap\set{\hat M\!=\!M,\,\tilde M\!\neq\!M}$ the
matching of Claim~\ref{claim:gamma} (or, in the deterministic case,
the I-OW per-symbol match) makes the channel output a no-adversary
transcript of $\tilde M$, and reliability gives
$\Pr_{\PNA}(\phi(Y^n)\!=\!\tilde M\!\mid\!M\!=\!\tilde M)\!\geq\!1\!-\!\epsilon_n$,
so the decoder produces an authentication error with conditional
probability $\geq\!\gamma(1-\gamma)(1-\epsilon_n)$ on $\mathcal{E}$.
By Lemma~\ref{lem:stopping},
$\Pr_{\PADV}(\mathcal{E})\!\geq\!3/4-\epsilon_n/(1-\sqrt\gamma)$.
Multiplying,
\begin{align}
\Proba(\hat M\!\notin\!\set{M,\reject})
&\geq\gamma(1\!-\!\gamma)(1\!-\!\epsilon_n)\!\left[\tfrac34\!-\!\tfrac{\epsilon_n}{1\!-\!\sqrt\gamma}\right],
\label{eq:long:combined_tight}
\end{align}
which is~\eqref{eq:converse_unified}.\hfill$\square$

\parnoindent{Specialization to the deterministic case}

For deterministic codes, the encoder's marginal $\Proba^{(\tau)}$ at
step $\tau$ is a point mass at $c_{\hat m,\tau}$. The matching
condition~\eqref{eq:claimstep_app} then reduces to the per-symbol
identity
$\sum_s\Probc^{(\tau)}(s)W(\cdot\!\mid\!c_{\hat m,\tau},s)=W(\cdot\!\mid\!c_{\tilde m,\tau},\szero)$,
which is exactly I-overwritability for the pair
$(c_{\hat m,\tau},c_{\tilde m,\tau})$. The rest of the proof is
identical, and the same bound~\eqref{eq:long:combined_tight} holds.

\parnoindent{Encoder feedback does not change the proof}

The matching at step $\tau$ uses
$\Proba^{(\tau)}\!=\!\Proba_{X_\tau\mid M\!=\!\hat m,\hat m,\tilde m,Y^{\tau-1},S^{\tau-1}}$,
which the adversary can compute from $(\hat m,y^{\tau-1},s^{\tau-1})$.
A feedback encoder that, having access to $Y^{\tau-1}$ as well, makes
$X_\tau$ depend on the past does not change this — the adversary,
having access to $Y^{\tau-1}$ as well, can compute the encoder's
distribution exactly. Hence the same proof covers both
$C^{\mathrm{fb}}$ and $C^{\mathrm{no\text{-}fb}}$.

\begin{remark}[Sharpness]
The constant $\tfrac{3\gamma(1-\gamma)}{4}$ is an artifact of Markov's
inequality at threshold $1/2$ in Lemma~\ref{lem:stopping} and the
loose bound $\Probg(1\!-\!\Probg)\!\geq\!\gamma(1\!-\!\gamma)$ on
$\mathcal{E}$. Iterating the stopping rule, using a tighter Doob
argument on $\sqrt{\Probg(Y^t,\szero^t)}$, or accounting for the actual
joint $\Probg-\sum_m P(m)^3$ can improve the constant; the
qualitative conclusion is unchanged.
\end{remark}

\section{Detailed proofs for stochastic achievability}
\label{app:authtag}

\begin{definition}[$(N,n)$-Authentication tag {\cite[Def.~9]{BakshiK:23ISIT}}]
\label{def:authtag}
An $(N,n)$-\emph{authentication tag} for $\chn$ consists of a
stochastic encoder $\Prob_{X^n\mid M}\colon[N]\!\to\!\mathcal{P}(\cX^n)$
together with a family of deterministic verifiers
$\set{\authdec_{\hat m}\colon\cY^n\!\to\!\set{\accept,\reject}}_{\hat m\in[N]}$,
indexed by a candidate $\hat m\!\in\![N]$ already available to the
receiver. The tag achieves error pair $(\lambda_1,\lambda_2)$ if
\begin{align}
&\max_{m\in[N]}\Pr\bigl(\authdec_m(Y^n)\!=\!\reject\bigm\vert M\!=\!m,\,S^n\!=\!\szero^n\bigr)\!\leq\!\lambda_1,\\
&\max_{\substack{m\in[N]\\\hat m\neq m}}\sup_{\Probc}\Pr\bigl(\authdec_{\hat m}(Y^n)\!=\!\accept\bigm\vert M\!=\!m\bigr)\!\leq\!\lambda_2,
\end{align}
where $\Probc$ ranges over feedback adversary strategies. The
\emph{tag capacity} $C_{\mathrm{tag}}$ is the supremum over $R$ such
that for every $(\lambda_1,\lambda_2)\!\in\!(0,1)^2$ there is a sequence
of $(N_n,n)$-tags achieving $(\lambda_1,\lambda_2)$ with
$\liminf_n\frac{1}{n}\log\log N_n\!\geq\!R$.
\end{definition}

\subsection{Auth-code construction: detailed proof of the achievability of Theorem~\ref{thm:posstoch}}
\label{app:authtag:proof}

We use the same code construction as
in~\cite[Sec.~V]{BakshiK:23ISIT}, based on the overlapping-sets
lemma~\cite[Lemma~3]{BakshiK:23ISIT}. The construction is a single
template that handles both the deterministic and the stochastic cases
through one parameter setting; we state it in stochastic form
(general witness distribution $\Prob_X$). The deterministic case
(Section~\ref{sec:detach}) is the specialization $\Prob_X\!=\!\delta_{x_0}$
called out at the end of this proof.

\paragraph*{Non-overwritability witness}
Since $\chn$ is not distribution-overwritable,
by Remark~\ref{rem:nonow} (Eq.~\eqref{eq:nonow}) there exist
$\Prob_X\!\in\!\mathcal{P}(\cX)$, $x'\!\in\!\cX$, and $\delta\!>\!0$ with
\begin{equation}
\kappa(\Prob_X,x')\!\triangleq\!\min_{\Probc\in\mathcal{P}(\cS)}\!\mathbb{V}\!\bigl(\textstyle\sum_{x,s}\Prob_X(x)\Probc(s)W(\cdot\!\mid\!x,s),\,W(\cdot\!\mid\!x',\szero)\bigr)\!\geq\!\delta.
\label{eq:tag:gap}
\end{equation}
The pair $(\Prob_X,x')$ is the \emph{non-overwritability witness}: $\Prob_X$
is the encoder's witness input distribution off the verification set,
and $x'$ is the target symbol against which the verifier compares.

\parnoindent{Codebook (overlapping-sets lemma)}
By~\cite[Lemma~3]{BakshiK:23ISIT} (which builds on the general
Gilbert bound for constant-composition codes,
\cite[Problem~10.1]{csiszar2011information}), for any
$\alpha,\beta\!\in\!(0,1)$ and any rate
$R\!<\!\min\set{\beta^2\alpha^2/6,\,\beta^2\alpha(1-\alpha)/4}$ there
exists a family $\mathfrak{B}\!=\!\set{\cB_m\!\subseteq\![n]:m\!\in\![N]}$
with $N\!\geq\!2^{Rn}$ satisfying:
\begin{enumerate}[(i)]
\item $\alpha(1\!-\!\beta)n\!\leq\!\lvert\cB_m\rvert\!\leq\!\alpha(1\!+\!\beta)n$ for all $m$;
\item $\lvert\cB_m\!\cap\!\cB_{\hat m}\rvert\!<\!\alpha^2(1\!+\!\beta)n$ for all $m\!\neq\!\hat m$;
\item $\lvert\cB_{\hat m}\!\setminus\!\cB_m\rvert\!>\!\alpha(1\!-\!\alpha)(1\!+\!\beta)n$ for all $m\!\neq\!\hat m$;
\item \emph{(Overlap property.)} For every $y^n\!\in\!\cY^n$ and every $m,\hat m$,
\begin{align}
&\mathbb{V}\!\bigl(\widehat P_{y^n}(\cdot\!\mid\!\cB_{\hat m}),\,W(\cdot\!\mid\!x',\szero)\bigr)\eqbr\geq\,\tfrac{(1-\alpha)(1-\beta)}{1+\beta}\,\mathbb{V}\!\bigl(\widehat P_{y^n}(\cdot\!\mid\!\cB_{\hat m}\!\setminus\!\cB_m),\,W(\cdot\!\mid\!x',\szero)\bigr)\eqbr\quad-\,\tfrac{2\alpha(1+\beta)}{1-\beta},
\label{eq:tag:overlap}
\end{align}
\end{enumerate}
where $\widehat P_{y^n}(\cdot\!\mid\!\cB)$ is the empirical type of
$y^n$ on positions $t\!\in\!\cB$. Property~(iv) is what handles the
match-vs-flip cancellation in the analysis below: the TV at the full
verification set is bounded below by the TV at the flip set
$\cB_{\hat m}\!\setminus\!\cB_m$ alone, modulo a small additive
correction tunable through $\alpha,\beta$.

\parnoindent{Encoder} Conditioned on $M\!=\!m$:
\begin{equation}
X_t\!=\!\begin{cases}x' & \text{if }t\!\in\!\cB_m,\\
\sim \Prob_X\text{ i.i.d.} & \text{if }t\!\notin\!\cB_m,\end{cases}
\label{eq:tag:enc}
\end{equation}
i.e., on the message-specific subset $\cB_m$ the encoder transmits
$x'$ deterministically, and off that subset it draws i.i.d.\ from the
witness distribution $\Prob_X$.

\parnoindent{Verifier} Given candidate $\hat m$ and received $Y^n$,
accept iff
\begin{equation}
\mathbb{V}\!\bigl(\widehat P_{Y^n}(\cdot\!\mid\!\cB_{\hat m}),\,W(\cdot\!\mid\!x',\szero)\bigr)\!\leq\!\delta/4.
\label{eq:tag:verify}
\end{equation}

\paragraph*{Choice of $\alpha,\beta$}
Pick $\alpha,\beta\!\in\!(0,1)$ small enough that
\begin{equation}
\tfrac{(1-\alpha)(1-\beta)}{1+\beta}\cdot\tfrac{\delta}{2}-\tfrac{2\alpha(1+\beta)}{1-\beta}\!>\!\tfrac{\delta}{4}.
\label{eq:tag:alphabeta}
\end{equation}
This is achievable since the LHS tends to $\delta/2\!>\!\delta/4$ as
$\alpha,\beta\!\to\!0$.

\parnoindent{No-adversary analysis (false alarm)}
Fix $m\!=\!\hat m$ and $S^n\!=\!\szero^n$. For $t\!\in\!\cB_m$ the
encoder transmits $x'$ deterministically and $Y_t\!\sim\!W(\cdot\!\mid\!x',\szero)$
i.i.d. By Hoeffding plus a $\lvert\cY\rvert$-fold union bound,
\begin{equation}
\Pr\!\bigl(\mathbb{V}(\widehat P_{Y^n}(\cdot\!\mid\!\cB_m),W(\cdot\!\mid\!x',\szero))\!>\!\delta/4\bigr)\!\leq\!2\lvert\cY\rvert e^{-c_1 n}\label{eq:FAc1}
\end{equation}
for some $c_1\!>\!0$ depending only on $\alpha,\delta,\lvert\cY\rvert$.
Union-bounded over $N\!\leq\!2^{nR}$ messages, the false-alarm error
is $o(1)$ provided $R\ln 2\!<\!c_1$.

\parnoindent{Adversarial analysis (missed detection)}
Fix $m\!\neq\!\hat m$ and a feedback adversary
$\Probc\!=\!\set{\Probc_{S_t\mid S^{t-1}\!,Y^{t-1}}}$. By the overlap
property~\eqref{eq:tag:overlap} and the choice of $\alpha,\beta$
in~\eqref{eq:tag:alphabeta}, acceptance of $\hat m$
(i.e.,~\eqref{eq:tag:verify}) implies
\begin{equation}
\mathbb{V}\!\bigl(\widehat P_{Y^n}(\cdot\!\mid\!\cB_{\hat m}\!\setminus\!\cB_m),\,W(\cdot\!\mid\!x',\szero)\bigr)\!<\!\delta/2.
\label{eq:tag:flip_test}
\end{equation}
We now bound the flip-set TV explicitly. Set
$J_{\mathrm{flip}}\!\triangleq\!\cB_{\hat m}\!\setminus\!\cB_m$ and let
$N\!\triangleq\!\lvert J_{\mathrm{flip}}\rvert\!>\!\alpha(1\!-\!\alpha)(1\!+\!\beta)n$
by (iii). Order the flip positions as $t_1\!<\!t_2\!<\!\cdots\!<\!t_N$.

The natural filtration adapted to the sequential
adversary is
\begin{equation}
\cF_t\!\triangleq\!\sigma\bigl(M,\hat M,X^t,S^t,Y^t\bigr).
\label{eq:tag:filt}
\end{equation}
On the flip set, $t\!\notin\!\cB_m$, so the encoder transmits
$X_t\!\sim\!\Prob_X$ i.i.d., conditionally independent of $\cF_{t-1}$.
The adversary then picks $S_t$ from the kernel
$\Probc_{S_t\mid\cF_{t-1}}(\cdot\mid\cF_{t-1})$, and the channel output is
$Y_t\!\sim\!W(\cdot\!\mid\!X_t,S_t)$. For any test set $A\!\subseteq\!\cY$ and any $t\!\in\!J_{\mathrm{flip}}$,
\begin{equation}
\Pr\!\bigl(Y_t\!\in\!A\!\mid\!\cF_{t-1}\bigr)=\!\!\sum_{x,s}\!\Prob_X(x)\,\Probc_{S_t\mid\cF_{t-1}}(s\!\mid\!\cF_{t-1})\,W(A\!\mid\!x,s)\eqqcolon p_t^{(A)}.
\label{eq:tag:cond_mean}
\end{equation}
The kernel $\Probc_{S_t\mid\cF_{t-1}}$ encodes the adversary's
sequential strategy. Note that $p_t^{(A)}$ is $\cF_{t-1}$-measurable. By the
non-overwritability gap~\eqref{eq:tag:gap},
\begin{equation}
\bigl|p_t^{(A)}\!-\!W(A\!\mid\!x',\szero)\bigr|\!\geq\!\delta\quad\text{a.s., uniformly over }\Probc.
\label{eq:tag:gap_pt}
\end{equation}
Note that~\eqref{eq:tag:gap_pt} holds uniformly over $\Probc$, not on
average: the gap is realized at each $\cF_{t-1}$-history. This lets a
single Azuma bound below cover all feedback adversaries simultaneously,
sparing us a supremum over an uncountable strategy class inside the
probability.
We define the following bounded-difference martingale. Let
\begin{equation}
Z_t^{(A)}\!\triangleq\!\mathbf{1}\set{Y_t\!\in\!A}\!-\!p_t^{(A)},\qquad t\!\in\!J_{\mathrm{flip}},
\label{eq:tag:Zt}
\end{equation}
and the partial sum
$M_T^{(A)}\!\triangleq\!\sum_{k=1}^{T}\!Z_{t_k}^{(A)}$. Since
$\Exp[Z_t^{(A)}\!\mid\!\cF_{t-1}]\!=\!0$ a.s.\ and
$\lvert Z_t^{(A)}\rvert\!\leq\!1$,
$\set{M_T^{(A)}}$ is an $\set{\cF_{t_T}}$-adapted martingale with
bounded differences. Azuma--Hoeffding gives, for any $\xi\!>\!0$,
\begin{equation}
\Pr\!\bigl(\lvert M_N^{(A)}\rvert\!>\!N\xi\bigr)\!\leq\!2e^{-N\xi^2/2}.
\label{eq:tag:azuma_A}
\end{equation}
Next, let 
$\widehat P_{Y^n}(A\!\mid\!J_{\mathrm{flip}})\!=\!\frac{1}{N}\sum_{k=1}^{N}\!\mathbf{1}\set{Y_{t_k}\!\in\!A}$
and
$\bar p^{(A)}\!\triangleq\!\frac{1}{N}\sum_{k=1}^{N}\!p_{t_k}^{(A)}$,
\eqref{eq:tag:azuma_A} reads
\begin{equation}
\Pr\!\bigl(\lvert \widehat P_{Y^n}(A\!\mid\!J_{\mathrm{flip}})\!-\!\bar p^{(A)}\rvert\!>\!\xi\bigr)\!\leq\!2e^{-N\xi^2/2}.
\label{eq:tag:azuma_emp}
\end{equation}
Choose $A\!=\!A^*$ to be the TV-attaining set
in~\eqref{eq:tag:gap}; then~\eqref{eq:tag:gap_pt} averaged over
$t\!\in\!J_{\mathrm{flip}}$ gives
$\lvert\bar p^{(A^*)}\!-\!W(A^*\!\mid\!x',\szero)\rvert\!\geq\!\delta$
a.s. Picking $\xi\!=\!\delta/2$, the reverse triangle inequality
in~\eqref{eq:tag:azuma_emp} yields
\begin{align}
&\Pr\!\bigl(\lvert\widehat P_{Y^n}(A^*\!\mid\!J_{\mathrm{flip}})\!-\!W(A^*\!\mid\!x',\szero)\rvert\!<\!\delta/2\bigr)\eqbr\leq\!2e^{-N\delta^2/8}.
\end{align}
Since $\mathbb{V}(P,Q)\!=\!\sup_A\lvert P(A)\!-\!Q(A)\rvert$, taking
$A\!=\!A^*$ on the LHS,
\begin{align}
&\Pr\!\bigl(\mathbb{V}(\widehat P_{Y^n}(\cdot\!\mid\!J_{\mathrm{flip}}),W(\cdot\!\mid\!x',\szero))\!<\!\delta/2\bigr)\eqbr\leq\!2e^{-N\delta^2/8}\!\leq\!2e^{-c_2 n},
\label{eq:tag:azuma}
\end{align}
where $c_2\!\triangleq\!\alpha(1\!-\!\alpha)(1\!+\!\beta)\delta^2/8\!>\!0$
uses $N\!>\!\alpha(1\!-\!\alpha)(1\!+\!\beta)n$ from (iii). Combining~\eqref{eq:tag:flip_test} and~\eqref{eq:tag:azuma}, missed
detection has probability $\leq\!2e^{-c_2 n}$ for any single
$\hat m\!\neq\!m$. Union-bounded over $N\!\leq\!2^{nR}$ candidates,
missed detection is $o(1)$ provided $R\ln 2\!<\!c_2$.

\parnoindent{Conclusion} Let $c_1$ and $c_2$ satisfy ~\eqref{eq:FAc1}  and~\eqref{eq:tag:azuma} respectively and let 
$\alpha,\beta$ satisfy~\eqref{eq:tag:alphabeta}. The above analysis shows that as long as $R$ satisfies $R\!<\!\min\set{c_1/\ln 2,\,c_2/\ln 2,\,\beta^2\alpha^2/6,\,\beta^2\alpha(1\!-\!\alpha)/4}$, both error events are $o(1)$
uniformly over feedback adversaries.\hfill$\square$

\subsection{Proof of Lemma~\ref{lem:tag}: lift via identification codes}
\label{app:authtag:idcomp}

We lift the auth-code of \extref{app:authtag:proof} to the
doubly-exponential rate claimed in Lemma~\ref{lem:tag}, mirroring
the construction of~\cite{BakshiK:23ISIT}: compose the auth-code
with an Ahlswede--Dueck identification code~\cite{AhlswedeD89} for
the noiseless binary channel.

\parnoindent{Setup}
Let $\mathsf{W}_{\mathrm{id}}$ denote the noiseless binary channel
($\cX\!=\!\cY\!=\!\set{0,1}$, output equals input). Fix
$R\!<\!R^{\mathrm{(auth)}}$ where $R^{\mathrm{(auth)}}\!>\!0$ is the
auth-code rate guaranteed by \extref{app:authtag:proof}. Pick
$\hat R\!\in\!(R,R^{\mathrm{(auth)}})$ and let
$\tilde R\!\triangleq\!R/\hat R\!\in\!(0,1)$. Fix target errors
$\lambda_1,\lambda_2\!\in\!(0,1/2)$.

\parnoindent{Inner identification code}
By~\cite[Theorem~1]{AhlswedeD89}, for every $\tilde n$ large enough
there is an $(N^{\mathrm{(id)}},\tilde n)$-identification code
$\bigl(\widetilde\Prob_{B^{\tilde n}\mid M},\set{\tilde\phi_m}_{m\in[N^{\mathrm{(id)}}]}\bigr)$
for $\mathsf{W}_{\mathrm{id}}$ with misidentification errors
$(\lambda_1/2,\lambda_2/2)$ and
\begin{equation}
N^{\mathrm{(id)}}\!\geq\!\bigl\lfloor 2^{2^{\tilde n\tilde R}}\bigr\rfloor.
\label{eq:idcomp:idsize}
\end{equation}
Here $M\!\in\![N^{\mathrm{(id)}}]$ is the source message,
$\widetilde\Prob_{B^{\tilde n}\mid M}$ is the (stochastic)
identification encoder, and $\tilde\phi_{m}\!:\!\set{0,1}^{\tilde n}\!\to\!\set{\accept,\reject}$
is the candidate-$m$ identification verifier. Its codeword set
has size at most
$N^{\mathrm{(id)}}_{\mathrm{out}}\!\triangleq\!2^{\tilde n}$.

\parnoindent{Outer auth-code} By \extref{app:authtag:proof}, for every $n$ large enough there is an
$(N^{\mathrm{(auth)}},n)$-\emph{authentication code}
$\bigl(\widehat\Prob_{X^n\mid B^{\tilde n}},\widehat\Phi\bigr)$ for
$\chn$ with no-adversary error $\leq\!\min\set{\lambda_1/2,\lambda_2/2}$,
missed-detection error $\leq\!\min\set{\lambda_1/2,\lambda_2/2}$
uniformly over feedback adversaries, and
\begin{equation}
N^{\mathrm{(auth)}}\!\geq\!\bigl\lfloor 2^{n\hat R}\bigr\rfloor.
\label{eq:idcomp:authsize}
\end{equation}
The authentication code's message space is identified with the identification
codeword space $\set{0,1}^{\tilde n}$ by choosing $n$ so that
$N^{\mathrm{(auth)}}\!\geq\!N^{\mathrm{(id)}}_{\mathrm{out}}\!=\!2^{\tilde n}$
(achievable since $\hat R\!>\!0$); the auth-code verifier
$\widehat\Phi(y^n)\!\in\!\set{0,1}^{\tilde n}\!\cup\!\set{\reject}$
either outputs a decoded identification codeword or rejects.

\parnoindent{Composite tag}
For each candidate $\hat m\!\in\![N^{\mathrm{(id)}}]$, define an
$(N^{\mathrm{(id)}},n)$-tag for $\chn$ with encoder
\begin{equation}
\Prob_{X^n\mid M}(x^n\!\mid\!m)\!=\!\!\!\sum_{b^{\tilde n}\in\set{0,1}^{\tilde n}}\!\!\!\widehat\Prob_{X^n\mid B^{\tilde n}}(x^n\!\mid\!b^{\tilde n})\,\widetilde\Prob_{B^{\tilde n}\mid M}(b^{\tilde n}\!\mid\!m),
\label{eq:idcomp:enc}
\end{equation}
and verifiers
\begin{equation}
\authdec_{\hat m}(y^n)\!=\!\begin{cases}
\reject & \text{if }\widehat\Phi(y^n)\!=\!\reject,\\
\reject & \text{if }\widehat\Phi(y^n)\!=\!\hat b^{\tilde n}\!\in\!\set{0,1}^{\tilde n}\\
        & \quad\text{and }\tilde\phi_{\hat m}(\hat b^{\tilde n})\!=\!\reject,\\
\accept & \text{otherwise.}
\end{cases}
\label{eq:idcomp:dec}
\end{equation}
That is, the receiver first runs the auth-code verifier $\widehat\Phi$
on $Y^n$ to decode either an identification codeword $\hat b^{\tilde n}$
or $\reject$; on a successful auth-decode it then runs the
identification verifier $\tilde\phi_{\hat m}$ for the candidate
$\hat m$ on $\hat b^{\tilde n}$.

\parnoindent{Rate}
By~\eqref{eq:idcomp:idsize} and the requirement
$N^{\mathrm{(auth)}}\!\geq\!2^{\tilde n}$ (i.e., $\tilde n\!\leq\!(1/\hat R)\log N^{\mathrm{(auth)}}\!\leq\!n$),
\begin{align}
\frac{1}{n}\log\log N^{\mathrm{(id)}}\eqbr\geq\frac{\tilde n\,\tilde R}{n}\eqbr\geq\frac{\tilde n\,\tilde R}{(1/\hat R)\log N^{\mathrm{(auth)}}}\eqbr\geq\tilde R\,\hat R\!=\!R.
\label{eq:idcomp:rate}
\end{align}
Thus $\frac{1}{n}\log\log N_n\!\geq\!R$, matching the rate claim of
Lemma~\ref{lem:tag}.

\parnoindent{Error analysis}
For the no-adversary side, $S^n\!=\!\szero^n$ implies
$\widehat\Phi(Y^n)\!=\!B^{\tilde n}$ except with probability
$\leq\!\lambda_1/2$ (auth-code false alarm), and on that event the
identification verifier $\tilde\phi_M(B^{\tilde n})$ accepts except
with probability $\leq\!\lambda_1/2$ (identification false alarm). By
union bound, the composite false alarm is $\leq\!\lambda_1$.

For the missed-detection side, fix $M\!=\!m$, $\hat m\!\neq\!m$, and
any feedback adversary $\Probc$. The composite verifier accepts
$\hat m$ only if both
(i) $\widehat\Phi(Y^n)\!=\!\hat b^{\tilde n}\!\in\!\set{0,1}^{\tilde n}$
(auth-code does not reject), and
(ii) $\tilde\phi_{\hat m}(\hat b^{\tilde n})\!=\!\accept$
(identification verifier accepts).
On (i), the auth-code construction of \extref{app:authtag:proof}
guarantees that
$\hat b^{\tilde n}$ equals the encoded $B^{\tilde n}$ except on an
event of $\Probc$-uniform probability $\leq\!\lambda_2/2$. On the
complementary event $\hat b^{\tilde n}\!=\!B^{\tilde n}$, the
identification verifier $\tilde\phi_{\hat m}(B^{\tilde n})$ accepts
with probability $\leq\!\lambda_2/2$ (identification missed
detection). Union-bounding the two events,
\begin{equation}
\sup_{\Probc}\Pr\!\bigl(\authdec_{\hat m}(Y^n)\!=\!\accept\bigm\vert M\!=\!m\bigr)\!\leq\!\lambda_2\quad\forall\hat m\!\neq\!m.
\end{equation}

%\paragraph*{Conclusion}
%The composite is an $(N^{\mathrm{(id)}},n)$-authentication tag for
%$\chn$ with errors $(\lambda_1,\lambda_2)$ and rate satisfying
%$\frac{1}{n}\log\log N^{\mathrm{(id)}}\!\geq\!R$, completing the proof
%of Lemma~\ref{lem:tag}.\hfill$\square$

\subsection{Concatenation with a channel code}
\label{app:authtag:composition}

We now provide full analysis for the composite encoder of
Section~\ref{sec:capacity}, addressing the subtlety that under
the composite, the adversary in the tag phase may have
\emph{additionally} observed the wait-phase outputs $Y^{n_1}$.

\paragraph*{Setup}
Let $(F^{\mathrm{ch}},\phi^{\mathrm{ch}})$ be a length-$n_1$ channel
code over $W_{Y\mid X,S=\szero}$ at rate $R\!<\!C_{\szero}$ with
no-adversary error $\epsilon_n\!=\!o(1)$. Let
$(\Prob^{\mathrm{tag}},\set{\authdec^{\mathrm{tag}}_{\hat m}})$ be an
$(N_n,n_2)$-tag from Lemma~\ref{lem:tag} with errors
$(\epsilon_n,\epsilon_n)$, $n_2\!=\!O(\log n)$. The composite encoder
is the product distribution~\eqref{eq:composite}; the composite
decoder is $\phi(Y^{n_1+n_2})\!=\!\hat m$ if
$\authdec^{\mathrm{tag}}_{\hat m}(Y_{n_1+1}^{n_1+n_2})\!=\!\accept$ and
$\reject$ otherwise, where
$\hat m\!=\!\phi^{\mathrm{ch}}(Y^{n_1})$.

\paragraph*{No-adversary error}
Under $S^{n_1+n_2}\!=\!\szero^{n_1+n_2}$, the codeword stage outputs
the correct $\hat m\!=\!M$ with probability $\geq\!1\!-\!\epsilon_n$ and,
on that event, the tag stage accepts with conditional probability
$\geq\!1\!-\!\epsilon_n$ (since the tag's false alarm is $\leq\!\epsilon_n$).
By a union bound,
$\Pr_{\PNA}(\phi(Y^{n_1+n_2})\!\neq\!M)\!\leq\!2\epsilon_n$.

\paragraph*{Missed-detection error under feedback adversary}
Fix any feedback adversary
$\Probc\!=\!\set{\Probc_{S_t\mid S^{t-1}\!,Y^{t-1}}}_{t=1}^{n_1+n_2}$.
The composite missed-detection event is
\begin{equation}
\mathcal{M}\triangleq\set{\phi(Y^{n_1+n_2})\!\notin\!\set{M,\reject}\!},
\end{equation}
i.e.\ $\hat m\!=\!\phi^{\mathrm{ch}}(Y^{n_1})\!\neq\!M$ \emph{and}
$\authdec^{\mathrm{tag}}_{\hat m}(Y_{n_1+1}^{n_1+n_2})\!=\!\accept$.
We bound $\Pr_{\PADV}(\mathcal{M})$ via two steps.

\emph{Step 1 (decompose by $\hat m$).}
Conditioning on $(M,Y^{n_1})$, the candidate
$\hat m\!=\!\phi^{\mathrm{ch}}(Y^{n_1})$ is determined; on the event
$\hat m\!\neq\!M$, the tag-phase outputs
$Y_{n_1+1}^{n_1+n_2}$ form a length-$n_2$ tag-channel transcript with
encoder $\Prob^{\mathrm{tag}}_{X^{n_2}\mid M}$ and adversary states
$S_{n_1+1}^{n_1+n_2}$ drawn according to
$\Probc_{S_t\mid S^{t-1}\!,Y^{t-1}}$ \emph{restricted to the tag phase
but with $Y^{t-1}$ inheriting the wait-phase outputs $Y^{n_1}$ as
prefix}.

\emph{Step 2 (induced adversary in tag phase).}
For each fixed wait-phase realization $(M,Y^{n_1},S^{n_1})$, define an
\emph{induced} tag-phase adversary
\begin{equation}
\widetilde\Probc_{S_t\mid S_{n_1+1}^{t-1},Y_{n_1+1}^{t-1}}\!\triangleq\!\Probc_{S_t\mid S^{t-1}\!,Y^{t-1}}\!\bigl(\cdot\!\mid\!(S^{n_1},S_{n_1+1}^{t-1}),(Y^{n_1},Y_{n_1+1}^{t-1})\bigr).
\label{eq:induced_adv}
\end{equation}
This $\widetilde\Probc$ is a feedback tag-phase adversary in the
sense of Definition~\ref{def:authtag} -- its
$t$-th conditional uses only history through time $t\!-\!1$ within the
tag phase, with the wait-phase realization frozen as a parameter.
Lemma~\ref{lem:tag} states that for \emph{every} such tag-phase
adversary,
$\Pr(\authdec^{\mathrm{tag}}_{\hat m}\!=\!\accept\mid M\!=\!m,\widetilde\Probc)\!\leq\!\epsilon_n$
for all $\hat m\!\neq\!m$. Marginalizing over $(M,Y^{n_1},S^{n_1})$
under the composite attack,
\begin{align}
&\Pr_{\PADV}(\mathcal{M})\nonumber\\
&\quad=\Exp_{\PADV}\!\bigl[\mathbf{1}_{\hat m\neq M}\,\Pr(\authdec^{\mathrm{tag}}_{\hat m}\!=\!\accept\!\mid\!M,\hat m,Y^{n_1},S^{n_1})\bigr]\nonumber\\
&\quad\leq\Exp_{\PADV}[\mathbf{1}_{\hat m\neq M}\,\epsilon_n]\!\leq\!\epsilon_n.
\label{eq:comp_miss}
\end{align}

\paragraph*{Total}
The auth-error of the composite is the sum of the no-adversary error
and the worst-case missed-detection:
\begin{equation}
\varepsilon(\phi)\!\leq\!\Pr_{\PNA}(\phi\!\neq\!M)+\sup_{\Probc}\Pr_{\PADV}(\mathcal{M})\!\leq\!3\epsilon_n\!=\!o(1).
\end{equation}
The composite rate is
$\frac{nR}{n_1+n_2}\!=\!\frac{nR}{n_1+O(\log n)}\!\to\!R$, so
$C^{\mathrm{fb}}_{\mathrm{auth,stoch}}\!\geq\!R$ for every
$R\!<\!C_{\szero}$, finishing the achievability halves of
Theorems~\ref{thm:posstoch} and~\ref{thm:caprate}.\hfill$\square$

\section{Proof of deterministic achievability (Theorem~\ref{thm:posdet})}
\label{app:detach_full}

The deterministic achievability follows similar to the auth-code construction proved in
\extref{app:authtag:proof} with
$\Prob_X$ replaced by $\delta_{x_0}$A. We list the only places where the proof
differs and verify that the rest carries over verbatim.

\paragraph*{Witness}
Use the I-overwritability witness $(x_0,x_0',\delta)$
from~\eqref{eq:Iow_separation}. The
distribution-overwritability gap~\eqref{eq:tag:gap} reduces to
\begin{equation}
\inf_{\Probc\in\mathcal{P}(\cS)}\!\mathbb{V}\!\bigl(\textstyle\sum_s\Probc(s)W(\cdot\!\mid\!x_0,s),\,W(\cdot\!\mid\!x_0',\szero)\bigr)\!\geq\!\delta,
\end{equation}
which is exactly~\eqref{eq:Iow_separation}.

\paragraph*{Encoder}
Substituting $\Prob_X\!=\!\delta_{x_0}$ in~\eqref{eq:tag:enc} makes the
encoder fully deterministic: $X_t\!=\!x_0'$ for $t\!\in\!\cB_m$,
$X_t\!=\!x_0$ otherwise. The codeword $c_m\!\in\!\set{x_0,x_0'}^n$ is
a deterministic function of $m$.

\paragraph*{Step~1 of \extref{app:authtag:proof} (conditional law on the flip set)}
On $J_{\mathrm{flip}}$ the encoder transmits $X_t\!=\!x_0$
deterministically, so the conditional output law given $\cF_{t-1}$
specializes to
\begin{equation}
\pi_t(y)\!=\!\sum_{s\in\cS}\Probc_{S_t\mid\cF_{t-1}}(s\mid\cF_{t-1})\,W(y\!\mid\!x_0,s).
\end{equation}
Its flip-set average $\bar P\!=\!\sum_s\bar\Probc(s)W(\cdot\!\mid\!x_0,s)$
satisfies $\mathbb{V}(\bar P,W(\cdot\!\mid\!x_0',\szero))\!\geq\!\delta$
a.s.\ by the I-overwritability gap.

\paragraph*{Step~2 (martingale concentration)}
Unchanged. The bounded-difference martingale
$Z_t\!=\!\mathbf{1}_{Y_t\in A}\!-\!\pi_t(A)$ on $J_{\mathrm{flip}}$
(with $\pi_t$ specialized as above) and the Azuma--Hoeffding
bound~\eqref{eq:tag:azuma} apply verbatim, since the
$|Z|\!\leq\!1$ envelope and the conditional-mean centering do not
depend on the input distribution.

\paragraph*{Step~3 (overlap property)}
Unchanged. The overlap-property bound~\eqref{eq:tag:overlap} is a
statement about codebook geometry alone and is identical for both
cases.

\paragraph*{Conclusion}
The combining argument of \extref{app:authtag:proof} carries over
verbatim, giving
$\Pr_{\PADV}(\hat M\!\notin\!\set{m,\reject})\!\leq\!\lceil 2^{nR}\rceil\cdot 2\lvert\cY\rvert e^{-c_2\mu n}\!=\!o(1)$
for $R\ln 2\!<\!c_2\mu$. Theorem~\ref{thm:posdet}'s achievability
follows.\hfill$\square$

\section{Bit-flip channel: I-overwritable but not distribution-overwritable}
\label{app:ex:bitflip}\label{ex:bitflip}

The separation exhibited by this channel between I-overwritable and
distribution-overwritable is established
in~\cite{BakshiBeemerK:prep}; we reprise the verification here for
completeness.

\emph{Channel.}\quad Take $\cX\!=\!\cY\!=\!\set{0,1}$,
$\cS\!=\!\set{\szero,s_1}$, with
$\chn(y\!\mid\!x,\szero)\!=\!\mathbf{1}\set{y\!=\!x}$ and
$\chn(y\!\mid\!x,s_1)\!=\!\mathbf{1}\set{y\!\neq\!x}$. The
no-adversary channel is the noiseless binary channel, with
$C_{\szero}\!=\!1$ bit/use.

\emph{I-overwritable.}\quad For any $(x,x')\!\in\!\cX^2$, the
oblivious mixing $\Probc(\szero)\!=\!\mathbf{1}\set{x\!=\!x'}$,
$\Probc(s_1)\!=\!\mathbf{1}\set{x\!\neq\!x'}$ matches
$\chn(\cdot\!\mid\!x',\szero)$, since under input $x$ the channel
delivers $x'$ with probability one when the state is chosen this
way.

\emph{Failure of distribution-overwritability.}\quad For target
$x'\!=\!0$ and input distribution $\Prob_X(0)\!=\!p\!\in\!(0,1)$, an
oblivious state distribution $\Probc(\szero)\!=\!q$ induces
\begin{equation}
\Pr(Y\!=\!0)\;=\;p\,q\;+\;(1\!-\!p)(1\!-\!q),
\end{equation}
which equals $1$ only at $\set{p,q}\!\in\!\set{(0,1),(1,0)}$. Both
points exclude $p\!\in\!(0,1)$, so no choice of $\Probc$ matches the
no-adversary marginal at $x'\!=\!0$ when the input is genuinely
randomized.

\emph{Capacity consequence.}\quad By Theorem~\ref{thm:posdet},
$C^{\bullet}_{\mathrm{auth,det}}\!=\!0$, while
Theorems~\ref{thm:posstoch} and~\ref{thm:caprate} give
$C^{\bullet}_{\mathrm{auth,stoch}}\!=\!C_{\szero}\!=\!1$~bit/use.

\section{AZBSC$(\delta,\alpha)$: distribution-overwritable but not obliviously overwritable}
\label{app:azbsc}\label{ex:azbsc}

The AZBSC channel and the separation it exhibits are
from~\cite{BakshiK:23ISIT}; we reprise the verification here for
completeness.

The AZBSC$(\delta,\alpha)$ has $\cX\!=\!\cY\!=\!\set{0,1}$,
$\cS\!=\!\set{\szero,s_{1,0},s_{1,1}}$, and parameters
$0\!<\!\alpha\!<\!\delta\!<\!1/2$. Under state $\szero$,
$\chn(\cdot\!\mid\!\cdot,\szero)$ is a $\mathrm{BSC}(\delta)$; under
$s_{1,x}$, $\chn(y\!\mid\!x,s_{1,x})\!=\!\mathbf{1}\set{y\!=\!x}$; and
\begin{align}
\chn(y\!\mid\!x,s_{1,x\oplus 1})
&=\alpha\,\mathbf{1}\set{y\!=\!x}+(1\!-\!\alpha)\mathbf{1}\set{y\!\neq\!x}.
\end{align}

\emph{Distribution-overwritability.}\quad For $\Prob_X(0)\!=\!p$ and
target $x'\!=\!0$, the strategy
\begin{align}
\Probc(\szero)\!=\!0,\quad
\Probc(s_{1,0})\!=\!\tfrac{1-\delta-p\alpha}{1-\alpha},\quad
\Probc(s_{1,1})\!=\!\tfrac{\delta-(1-p)\alpha}{1-\alpha}
\end{align}
satisfies $\Exp_{X,S}\chn(\cdot\!\mid\!X,S)\!=\!\chn(\cdot\!\mid\!0,\szero)$.

\emph{Failure of oblivious overwritability.}\quad An oblivious
$\Probc(s_{1,0})\!=\!q_0$, $\Probc(s_{1,1})\!=\!q_1$ matching
$\chn(\cdot\!\mid\!0,\szero)$ for both $x\!=\!0$ and $x\!=\!1$ forces
$q_0\!+\!q_1\!=\!(1\!-\!2\delta)/(1\!-\!2\delta\!-\!\alpha)\!\notin\![0,1]$
for any admissible $(\delta,\alpha)$.

\emph{Capacity consequence.}\quad By Theorems~\ref{thm:posstoch}
and~\ref{thm:caprate}, $C^{\bullet}_{\mathrm{auth,stoch}}\!=\!C_{\szero}$,
the no-adversary capacity of the BSC$(\delta)$, while the channel is
not obliviously overwritable so the dichotomy of~\cite{KK18} would not
yield this conclusion.
\fi 
\end{document}